\DeclareMathAlphabet{\mathpzc}{OT1}{pzc}{m}{it}
\patchcmd{\@makecaption} {\\}
{.\ }
{}
{}
\newtheorem{theorem}{\textbf{Theorem}}
\newtheorem{corollary}{\textbf{Corollary}}
\newtheorem{definition}{\textbf{Definition}}
\newtheorem{lemma}{\textbf{Lemma}}
\newtheorem{remark}{\textbf{Remark}}
\begin{document}
\abovedisplayskip=2.2pt
\belowdisplayskip=2.2pt
\begin{spacing}{1.0}
\title{Coded Computing and Cooperative Transmission for Wireless Distributed Matrix Multiplication} \vspace{-5mm}
\author{\IEEEauthorblockN{Kuikui Li, Meixia Tao, Jingjing Zhang, and Osvaldo Simeone }
\thanks{This work was presented in part at the IEEE ISIT 2020 \cite{kklisit}. Kuikui Li and Meixia Tao are with the Department of Electronic Engineering, Shanghai Jiao Tong University, Shanghai, P. R. China (Email: \{kuikuili, mxtao\}@sjtu.edu.cn). Jingjing Zhang and Osvaldo Simeone are with the KCLIP Lab, Department of Engineering, King{'}s College London, London, UK (Email: \{jingjing.1.zhang, osvaldo.simeone\}@kcl.ac.uk).

The work by K. Li and M. Tao is supported by the National Key R$\&$D Project of China under grant 2020YFB1406802 and the National Natural Science Foundation of China under grant 61941106. The work by J. Zhang and O. Simeone is supported by the European Research Council under the European Union's Horizon 2020 Research and Innovation Programme (Grant Agreement No. 725731).}}
\maketitle
\vspace{-4mm}
\begin{abstract}
\vspace{-2mm}
Consider a multi-cell mobile edge computing network, in which each user wishes to compute the product of a user-generated data matrix with a network-stored matrix. This is done through task offloading by means of input uploading, distributed computing at edge nodes (ENs), and output downloading. Task offloading may suffer long delay since servers at some ENs may be straggling due to random computation time, and wireless channels may experience severe fading and interference. This paper aims to investigate the interplay among upload, computation, and download latencies during the offloading process in the high signal-to-noise ratio regime from an information-theoretic perspective. A policy based on cascaded coded computing and on coordinated and cooperative interference management in uplink and downlink is proposed and proved to be approximately optimal for a sufficiently large upload time. By investing more time in uplink transmission, the policy creates data redundancy at the ENs, which can reduce the computation time, by enabling the use of coded computing, as well as the download time via transmitter cooperation. Moreover, the policy allows computation time to be traded for download time. Numerical examples demonstrate that the proposed policy can improve over existing schemes by significantly reducing the end-to-end execution time.
\end{abstract}
\begin{IEEEkeywords}
\vspace{-2mm}
Matrix Multiplication, Straggler, Edge Computing, Transmission Cooperation, Coded Computing \end{IEEEkeywords}
\section{Introduction}
\textbf{Motivation and scope:} Mobile edge computing (MEC) is an emerging network architecture that enables cloud-computing capabilities at the edge nodes (ENs) of mobile networks\cite{6923537,MEC,7931566}. Through task offloading, MEC makes it possible to offer mobile users intelligent applications, such as recommendation systems or gaming services, that would otherwise require excessive on-device storage and computing resources. Deploying task offloading, however, poses non-trivial design problems. On one hand, task offloading may require a large amount of data to be transferred between users and ENs over uplink or downlink channels, which may suffer severe channel fading and interference conditions, resulting in large communication latencies.  On the other hand, edge servers are likely to suffer from the straggling effect, yielding unpredictable computation delays\cite{dean2013tail}. A key problem in MEC networks, which is the subject of this paper, is to understand the interplay and performance trade-offs between two-way communication (in both uplink and downlink) and computation during the offloading process.

To this end, this study focuses on the baseline problem of computing the product between user-generated data vectors $\{\mathbf{u}\}$ and a network-stored matrix $\mathbf{A}$.  Matrix multiplication is a representative computation task that underlies many machine learning and data analytic problems. Examples of applications include recommendation systems based on collaborative filtering\cite{5197422}, in which the user-generated data $\{\mathbf{u}\}$ corresponds to user profile vectors, while the network-side matrix $\mathbf{A}$ collects the profile vectors of a certain class of items, e.g., movies. Matrix $\mathbf{A}$ is generally very large in practice, preventing a simple solution whereby users download and store the matrix for local computation.

Matrix multiplication, as many other more complex computations\cite{bekkerman2011scaling}, can be decomposed into subtasks and distributedly computed across multiple servers.
In MEC networks, the servers are embedded in distinct ENs, and hence distributed computing at the edge requires input data uploading via the uplink, computation at the ENs, and output data downloading via the downlink. A fundamental question that this work tackles is: \emph{What is the minimum achievable upload-compute-download latency triplet for completing matrix multiplication in the presence of straggling servers and multi-cell interference?}

In the task offloading process discussed above, the overall latency is the sum of three components, namely the time needed for input uploading, server computing, and output downloading. This paper is devoted to studying the interplay and trade-offs among these three components from an information-theoretic standpoint. A key result that will be illustrated by our results is that investing more time in any one of the three steps may be instrumental in reducing the time needed for subsequent steps thanks to coded computing \cite{fundamental_tradeoff,Speeding_Up,songzestraggler,improvedJingjing,8502151,jingjing,8006960,stagglerexploit,alternativetrade-off,RandomConnectivity,YanTradeoff,YanStraggling} and cooperative transmission \cite{7857805,sengupta2017fog,FanXu,8624603,TaoCache}. As explained next, both coded computing and cooperative transmissions leverage forms of computation redundancy.
\begin{figure}[t]
\centering
\includegraphics[width=4.6in, height=2.5in]{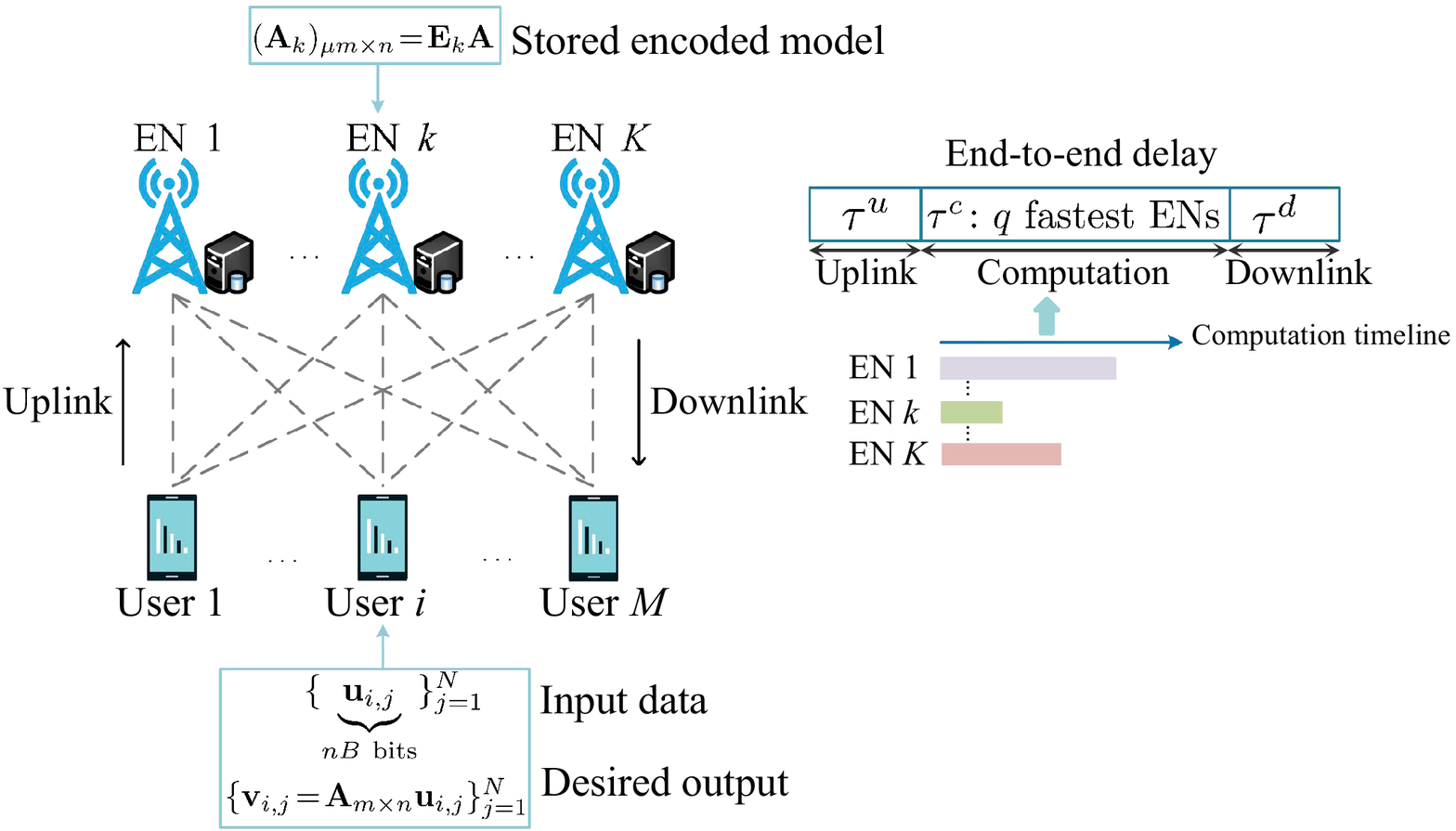}
\vspace{-2mm}
\caption{A multi-cell MEC network carrying out distributed matrix multiplication via uplink communication, edge computing, and downlink communications.}
\vspace{-8mm}
\label{sysModel}
\end{figure}

\textbf{Background and related works:} Coded computing was introduced in \cite{Speeding_Up} for a master-slave system with ideal communication links and linear computations. The approach aims at reducing the average latency caused by distributed servers with random computation time, hence mitigating the problem of \emph{straggling servers}\cite{dean2013tail}, through linear coding of the rows of matrix $\mathbf{A}$. Linear coding assigns each server a flexible number of encoded rows of matrix $\mathbf{A}$. Thanks to maximum distance separable (MDS) coding, assigning more coded rows at the servers reduces the number of servers that need to complete their computations in order to recover the desired outputs\cite{songzestraggler,improvedJingjing,8502151}. Coded computing was introduced in \cite{jingjing} as a means to speed up the computation of distributed matrix multiplication in a MEC system, providing a starting point for this work.

A simple way to ensure computation redundancy is to assign repeatedly the same rows of matrix $\mathbf{A}$ across multiple ENs. While this does not provide the same robustness against stragglers as MDS coding, it allows ENs to compute common outputs, i.e., computation replication, as proposed in \cite{KKLMEC}. This in turn makes it possible for the ENs to cooperate for transmission to the users in the downlink, which can reduce the download latency in an interference-limited system such as multi-user multi-server MEC systems shown in Fig. \ref{sysModel}. This form of cooperative transmission enabled by computation redundancy has been explored by \cite{8007057,jingjing,KKLMEC} for task offloading in multi-cell MEC systems and by \cite{li2019wireless} for data shuffling in wireless MapReduce systems, all with the goal of mitigating the multi-cell interference and hence boost the communication efficiency. Cooperative transmission has also been explored in the context of multi-cell caching systems in \cite{7857805,sengupta2017fog,FanXu,8624603,TaoCache} to accelerate content delivery by caching overlapped contents at different ENs.

\textbf{Overview and main contributions:} In the MEC system of Fig. \ref{sysModel}, investing more time for uplink communication allows the same user-generated input vectors to be received by more ENs, which enhances computation redundancy. The computation redundancy  generally introduces a heavier computation load, which can in turn increase the robustness against straggling servers via coded computing and mitigate multi-cell downlink interference via cooperative transmission. Based on these observations, this paper aims to establish the optimal trade-off between computing and download latencies at any given upload latency.  We focus on the high signal-to-noise ratio (SNR) regime in order to highlight the role of interference management as enabled by computation redundancy.

The most related prior works, as reviewed above, are \cite{jingjing} and \cite{KKLMEC}. The work \cite{jingjing} proposes a computing and downloading strategy by making the simplified assumption that the upload time is unconstrained so that the input vectors from all users are available at all ENs. The work \cite{KKLMEC} characterizes the trade-off between upload and download latencies by assuming that the computation time at each EN is deterministic (in contrast to random) so that coded computing is not needed. Moreover, reference \cite{KKLMEC} adopts a general task model, rather than matrix multiplication as studied in this work. In contrast to \cite{jingjing} and \cite{KKLMEC}, in this paper, we study the joint design of task assignment, input upload, edge computing, and output download, and we analyze the performance trade-offs among upload, computing and download latencies.

In summary, this paper studies the communication (in both uplink and downlink) and computation tradeoff in multi-user multi-server MEC networks by enabling the use of coded computing and cooperative transmission. The main contributions are as follows:
\begin{itemize}
\item We propose a new task offloading strategy that integrates coded computing based on a cascade of MDS and repetition codes\cite{improvedJingjing} with cooperative transmission at the ENs for interference management  \cite{FanXu}. By uploading the same input vectors of a user to multiple ENs, the policy creates data redundancy at the ENs that is leveraged to reduce the computation time by coded computing and the download time via transmission cooperation. Moreover, by waiting for more non-straggling ENs to finish their tasks, the proposed policy enhances the downlink transmission cooperation opportunities, and hence it allows the computation time to be traded for download time.
\item We derive achievable upload-compute-download latency triplets, as well as the end-to-end execution time, and characterize the trade-off region between computing and download latencies at any given upload latency. The analysis of upload and download latencies relies respectively on the analysis of degrees of freedom (DoF) for the effective X-multicast channel formed during uplink transmission and for the cooperative X channels obtained during downlink transmission\cite{FanXu}.
\item Furthermore, we provide a converse result that demonstrates the optimality of the achievable upload latency for fixed computation and download latencies, as well as constant multiplicative gaps to their  respective lower bounds for computation and download latencies at a large upload latency. The proof is based on genie-aided arguments and on a generalization of the arguments in
    \cite{sengupta2017fog}. The end-to-end execution time is also proved to be order-optimal for a large upload latency.
\item Through numerical examples, we show that, as compared to baseline schemes, the proposed policy can  reduce the overall end-to-end execution time. 
    We also show that, when the downlink transmission is the major bottleneck of the offloading process, the proposed cascaded MDS-repetition coding scheme reduces to repetition coding with no loss of optimality; while, when the bottleneck comes from the uplink transmission or edge computing, MDS coding is required to mitigate the effect of straggling ENs.
\end{itemize}

The rest of the paper is organized as follows. Section \ref{problemformu} presents the problem formulation and definitions. Main results including communication-computation latency trade-offs are presented in Section \ref{mainresults}. The proposed scheme is detailed in Section \ref{AchievableScheme}. Section \ref{simulation} provides numerical examples.  Conclusions are drawn in Section \ref{conclusion}.  The converse proof is available in Appendix.

Notations: $\mathcal{K}$ denotes the set of indexes $\{1,2,\cdots\!,K\}$. $[a\!:\!b]$ denotes the set of integers $\{a\!+\!1,a\!+\!2,\dots,b\}$. $[a]$ denotes the set of integers $[1\!:\!a]$. $(\cdot)^{T}$ denotes the transpose.  $(x)^+$ denotes $\max\{x,0\}$. $(X_i)_{i=a}^{b}$ denotes the vector $(X_a,X_{a+1},\cdots\!,X_b)^{T}$. $\{x_i\!:i\!\in\![a\!:\!b]\}$ or $\{x_i\}_{i=a}^{b}$ denotes the set $\{x_a, x_{a+1},\cdots\!,x_b\}$. $\{x_k\}_{q:K}$ denotes the $q$-th smallest element of set $\{x_k\!: k\!\in\![K]\}$. $\mathbb{F}^{m\times n}_{2^B}$ denotes the set of all matrices of dimension $m\times n$ with entries in the finite field $\mathbb{F}_{2^B}$\cite{Freedman1986Introduction}. 
\section{Problem Formulation} \label{problemformu}
\subsection{MEC Network Model}
As shown in Fig. \ref{sysModel}, we consider a multi-cell MEC network consisting of $K$ single-antenna ENs communicating with $M$ single-antenna users via a shared wireless channel. Denote by $\mathcal{K}\!=\!\{1,2,\ldots,K\}$ the set of ENs and by $\mathcal{M}\!=\!\{1,2,\ldots,M\}$ the set of users. Each EN is equipped with an edge server. 
 The mobile users offload their
computing tasks to the ENs through the uplink channel (from users to ENs) and then download
the computation results back via the downlink channel (from ENs to users). Let $h_{ki}^{\textnormal{u}}$ denote the uplink channel fading from user $i\!\in\!\mathcal{M}$ to EN $k\!\in\!\mathcal{K}$, and $h_{ik}^{\textnormal{d}}$ denote the downlink channel fading from EN $k\!\in\! \mathcal{K}$ to user $i\!\in\!\mathcal{M}$, both of which are independent and identically distributed (i.i.d.) for all pairs $(i,k)$ according to some continuous distribution.  
A central scheduling unit (CSU) is connected to all nodes via backhaul links to collect the uplink channel state information (CSI) $\mathbf{H}^\textnormal{u}\!\triangleq\!\{h_{ki}^{\textnormal{u}}\!:k\!\in\!\mathcal{K},i\!\in\!\mathcal{M}\}$ and downlink CSI $\mathbf{H}^\textnormal{d}\!\triangleq\!\{h_{ik}^{\textnormal{d}}\!:i\!\in\!\mathcal{M},k\!\in\!\mathcal{K}\}$ estimated at these nodes. It utilizes the collected global CSI to design the transmit or receive beamforming coefficients for the symbols transmitted or received at the nodes. We assume perfect CSI for uplink and downlink channels at the CSU, and we refer to \cite{AliCSI,JafarCSI,XCSI} for analysis of the impact of imperfect CSI in the high-SNR regime.

We consider that each user has a matrix multiplication task to compute. Matrix multiplication is a building block of many machine learning and data analytic problems, e.g., linear inference tasks including collaborative filtering for recommendation systems\cite{5197422}.
Specifically, we assume that each user $i$ has an input matrix with $N$ input vectors\footnote{The same number of input vectors for different users is assumed for analytical tractability. When this is not the case, a simple, generally suboptimal, solution is to add extra inputs (e.g., zero vectors) to each user to make their inputs equal in number.} $\mathbf{u}_{i,j}\!\in\!\mathbb{F}^{n\times1}_{2^B}$, $j\!\in\![N]$, and it wishes to compute the output matrix with $N$ output vectors $\mathbf{v}_{i,j}\!\in\!\mathbb{F}^{m\times 1}_{2^B}$, $j\!\in\![N]$, where
\begin{equation}\label{output}
\mathbf{v}_{i,j}\!=\! \mathbf{A}{\mathbf{u}_{i,j}},~\text{for}~j\!\in\![N],
\end{equation}
and $\mathbf{A}\! \in\!\mathbb{F}^{m\times n}_{2^B}$ is a data matrix available at the network end, and $B$ is the size (in bits) of each element. The matrix $\mathbf{A}$ is partially stored across the ENs, which conduct the product operations in a distributed manner. To this end, each EN $k$ has a storage capacity of $\mu m nB$ bits, and it can hence store a fraction $\mu\!\in\![\frac{1}{K},1]$ of the rows of matrix $\mathbf{A}$. Specifically, during an offline storage phase, an encoding matrix $\mathbf{E}_k\!\in\!\mathbb{F}^{\mu m\times m}_{2^B}$ is used to generate a coded matrix $\mathbf{A}_k\!=\!\mathbf{E}_k\mathbf{A}$, which is then stored at EN $k$, as in \cite{songzestraggler,improvedJingjing}.
\subsection{Task Offloading Procedure}
The task offloading procedure proceeds through task assignment, input uploading, edge computing, and output downloading.
\subsubsection{Task Assignment} A task assignment scheme is defined through the following sets \begin{equation}
\{\mathcal{U}_{i,\mathcal{K}^{'}}\!:i\!\in\!\mathcal{M},\mathcal{K}^{'}\!\subseteq\! \mathcal{K}\},
\end{equation}
where $\mathcal{U}_{i,\mathcal{K}^{'}}\!\subseteq\!\{\mathbf{u}_{i,j}\}^N_{j=1}$ denotes the subset of input vectors from user $i$ that are assigned only to the subset of ENs $\mathcal{K}^{'}$ for computation. We impose the condition $\bigcup_{\mathcal{K}^{'}\!\subseteq \mathcal{K}} \mathcal{U}_{i,\mathcal{K}^{'}}\!=\!\{\mathbf{u}_{i,j}\}^N_{j=1}$ for $i\!\in\!\mathcal{M}$ to guarantee that all input vectors are computed. Furthermore, by definition, we have the relation $\mathcal{U}_{i,\mathcal{K}^{'}}\!\bigcap \mathcal{U}_{i,\mathcal{K}^{''}}\!=\!\varnothing$ for $\mathcal{K}^{'}\!\neq\!\mathcal{K}^{''}$ so that these subsets are not overlapped. The subset of input vectors from all users assigned to each EN $k$ is hence given as $\mathcal{U}_k\!=\!\bigcup_{{i}\in\mathcal{M},\,\mathcal{K}^{'}\!\subseteq \mathcal{K}:\,k\in\mathcal{K}^{'}} \mathcal{U}_{i,\mathcal{K}^{'}}$.
\begin{definition}\label{defenition1}
(Repetition Order) For a given task assignment scheme $\{\mathcal{U}_{i,\mathcal{K}^{'}}\}_{i\in\mathcal{M},\mathcal{K}^{'}\subseteq\mathcal{K}}$, the repetition order $r$, with $1\!\le\!r \!\le \!K$, is defined as average input data redundancy, i.e., the total number of
input vectors assigned to the $K$ ENs (counting repetitions) divided by
the total number of input vectors of the $M$ users, i.e.,
\begin{equation}
r\!\triangleq\!\frac{\sum_{k\in\mathcal{K}}|\mathcal{U}_{k}|}{MN}.
\end{equation}
\end{definition}
The repetition order indicates the average number of ENs that are assigned the same input vector, which has been adopted in \cite{KKLMEC} as a measure of the degrees of computation replication.
The above task assignment $\{\mathcal{U}_{i,\mathcal{K}^{'}}\}$ is realized through the following input uploading phase.
\subsubsection{Input Uploading}
At run time, each user $i$ maps its input vectors $\{\mathbf{u}_{i,j}\}^N_{j=1}$
into a codeword $\mathbf{X}_i^{\textnormal{u}}\!\triangleq\!\left(X_i^{\textnormal{u}}(t)\right)^{T^\textnormal{u}}_{t=1}$ of length $T^\textnormal{u}$ symbols under the power constraint $(T^\textnormal{u})^{-1}\mathbb{E}\big[||\mathbf{X}_i^{\textnormal{u}}||^2\big]\!\le\! P^\textnormal{u}$. Note that $X_i^{\textnormal{u}}(t)\!\in\! \mathbb{C}$ is the symbol transmitted at time $t\!\in\![T^\textnormal{u}]$. At each EN $k\!\in\!\mathcal{K}$, the received signal $Y_k^{\textnormal{u}}(t) \!\in\! \mathbb{C}$ at time $t\!\in\![T^\textnormal{u}]$ can be expressed as
\begin{equation}
Y_k^{\textnormal{u}}(t)\!=\!\sum_{i \in \mathcal{M}}h_{ki}^{\textnormal{u}}(t)X_i^{\textnormal{u}}(t)\!+\!Z_{k}^{\textnormal{u}}(t),
\end{equation}
where $Z_{k}^{\textnormal{u}}(t)\!\sim\!\mathcal{CN}(0,1)$ denotes the noise at EN $k$. Each EN $k$ decodes the sequence $\left(Y_k^{\textnormal{u}}(t)\right)^{T^\textnormal{u}}_{t=1}$ into an estimate $\{\widehat{\mathbf{u}}_{i,j}\}$ of the assigned input vectors $\{\mathbf{u}_{i,j}\!:\mathbf{u}_{i,j}\!\in\!\mathcal{U}_k\}$.
\subsubsection{Edge Computing}
After the uploading phase is completed, each EN $k$ computes the products of the assigned estimated input vectors in set $\mathcal{U}_{k}$ with its stored coded model $\mathbf{A}_k$. 
The computation time for EN $k$ to complete the computation of the corresponding $\mu m |\mathcal{U}_{k}|$ row-vector products\footnote{The row-vector product indicates the product of a row vector of matrix $\mathbf{A}$ with a column vector $\mathbf{u}_{i,j}$.} is modeled as
\begin{equation} \label{computingtime}
T_k^{\textnormal{c}} = \mu m |\mathcal{U}_{k}| \omega_k, ~\text{for}~k\in\mathcal{K},
\end{equation}
where the random variable $\omega_k$ represents the time needed by EN $k$ to compute a row-vector product, and it is modelled as an exponential distribution with mean $1/\eta$ (see, e.g., \cite{Speeding_Up,songzestraggler,8006960,8502151}). $T_k^{\textnormal{c}}$ is thus a scaled exponential distribution with mean $ \mu m |\mathcal{U}_{k}|/\eta$.
The MEC network waits until the fastest $q$ ENs, denoted as subset $\mathcal{K}_q\!\subseteq\!\mathcal{K}$, have finished their tasks before returning the results back to users in the downlink. The cardinality $|\mathcal{K}_q|\!=\!q$ is referred to as \emph{the recovery order}.
The rest of $K\!-\!q$ ENs are known as \emph{stragglers}. The resulting (random) duration of the edge computing phase is hence equal to the maximum computation time of the $q$ fastest ENs, i.e.,
$T^\textnormal{c}=\max_{k\in\mathcal{K}_q} T_k^{\textnormal{c}}$.
\subsubsection{Output Downloading}
At the end of the edge computing phase, each EN $k\!\in\!\mathcal{K}_q$ obtains the coded outputs $\mathcal{V}_k\!\triangleq\!\left\{\mathbf{v}_{i,j,k}\!=\!\mathbf{A}_k\widehat{\mathbf{u}}_{i,j}\!:\mathbf{u}_{i,j}\!\in\!\mathcal{U}_{k}\right\}$. 
Every EN $k$ in $\mathcal{K}_q$ then maps $\mathcal{V}_k$ 
into a length-$T^\textnormal{d}$ codeword $\mathbf{X}_k^{\textnormal{d}}\!\triangleq\!\left(X_{k}^{\textnormal{d}}(t)\right)^{T^\textnormal{d}}_{t=1}$ with an average power constraint $(T^\textnormal{d})^{-1}\mathbb{E}\big[||\mathbf{X}_k^{\textnormal{d}}||^2\big]\!\le\! P^\textnormal{d}$. For each user $i\!\in\!\mathcal{M}$, its received signal $Y_i^{\textnormal{d}}(t) \!\in\! \mathbb{C}$ at time $t\!\in\![T^\textnormal{d}]$ is given by
\begin{equation}
Y_i^{\textnormal{d}}(t)\!=\!\sum_{k \in \mathcal{K}_q}h_{ik}^{\textnormal{d}}(t)X_{k}^{\textnormal{d}}(t)\!+\!Z_{i}^{\textnormal{d}}(t),
\end{equation}
where $Z_{i}^{\textnormal{d}}(t)\!\sim\!\mathcal{CN}(0,1)$ is the noise at user $i$. Each user $i$ decodes the sequence $(Y_i^{\textnormal{d}}(t))^{T^\textnormal{d}}_{t=1}$ to obtain an estimate $\{\widehat{\mathbf{v}}_{i,j,k}\}_{j\in[N],k\in\mathcal{K}_q}$ of the coded outputs, from which it obtains an estimate  $\{\widehat{\mathbf{v}}_{i,j}\}_{j\in[N]}$ of its desired outputs. 
This is possible if the estimated coded outputs $\{\widehat{\mathbf{v}}_{i,j,k}\}_{j\in[N],k\in\mathcal{K}_q}$ contain enough information to guarantee the condition  $H(\{\mathbf{v}_{i,j}\}_{j\in[N]}|\{\widehat{\mathbf{v}}_{i,j,k}\}_{j\in[N],k\in\mathcal{K}_q})\!=\!0$. The overall error probability is given as $\mathrm{P}_\text{e}\!\triangleq\!\mathbb{P}\big(\bigcup^{M~~N}_{i=1,j=1} \left\{\widehat{\mathbf{v}}_{i,j}\!\neq\!\mathbf{v}_{i,j}\right\}\!\big)$.
A task offloading policy is said to be feasible when the error probability $\mathrm{P}_\text{e}\!\to\!0$ as $B\!\to\!\infty$.
\subsection{Performance Metric}
The performance of the considered MEC network is characterized by the  latency triplet accounting for task uploading, computing, and output downloading, which we measure in the high-SNR regime as defined below.
\begin{definition}\label{defenition2}
The normalized uploading time (NULT), normalized computation time (NCT), and normalized downloading time (NDLT) achieved by a feasible policy with repetition order $r$ and recovery order $q$ are defined, respectively, as
\begin{align}
\tau^\textnormal{u}(r) &\triangleq \lim_{P^\textnormal{u}\to\infty} \lim_{B\to\infty} \frac{\mathbb{E}_{\mathbf{H}^\textnormal{u}}[T^\textnormal{u}]}{ NnB/\log P^\textnormal{u}}, \label{NULTtau}\\
\tau^\textnormal{c} (r,q) &\triangleq \lim_{m\to\infty}\frac{\mathbb{E}_{\boldsymbol{\omega}}\left[T^\textnormal{c}\right]}{ Nm/\eta},\label{NCTtau}\\
\tau^\textnormal{d}(r,q) &\triangleq \lim_{P^\textnormal{d}\to\infty}\lim_{m\to\infty}\lim_{B\to\infty} \frac{\mathbb{E}_{\mathbf{H}^\textnormal{d}}[T^\textnormal{d}]}{NmB/\log P^\textnormal{d}}. \label{NDLTtau}
\end{align}
\end{definition}
The definitions (\ref{NULTtau}) and (\ref{NDLTtau}) have been also adopted in \cite{KKLMEC}, and follow the approach introduced in \cite{sengupta2017fog} by normalizing the delivery times to those of reference interference-free systems (with high-SNR rates $\log P^\textnormal{u}$ and $\log P^\textnormal{d}$, respectively). Similarly, the computation time in definition (\ref{NCTtau}) is normalized by the average time needed to compute over all the input vectors of a user. To avoid rounding complications, in  definition (\ref{NCTtau}) and  (\ref{NDLTtau}), we let the output dimension $m$ grow to infinity.
\begin{definition} \label{def3}
Given the definition of achievable NULT-NCT-NDLT triplet $\left(\tau^{\textnormal{u}}(r),\tau^{\textnormal{c}}(r,q),\tau^{\textnormal{d}}(r,q)\right)$ with repetition order $r$ and recovery order $q$ as in Definition \ref{defenition2}, the optimal compute-download latency region for a given NULT $\tau^{\textnormal{u}}$ is defined as the union of all NCT-NDLT pairs $(\tau^\textnormal{c}, \tau^{\textnormal{d}})$ that satisfy $\tau^{\textnormal{c}}\!\ge\! \tau^{\textnormal{c}}(r,q)$ and $\tau^{\textnormal{d}}\!\ge\!\tau^{\textnormal{d}}(r,q)$ for some $(r,q)$ while the corresponding NULT $\tau^\textnormal{u}(r)$ is no larger than $\tau^\textnormal{u}$, i.e.,
\begin{align}\label{region}
\!\!\!\mathscr{T}^{*}(\tau^\textnormal{u})\!\triangleq\!\big\{\!(\tau^\textnormal{c}, \tau^\textnormal{d})\!: \left(\tau^\textnormal{u}(r),\tau^\textnormal{c}(r,q),\tau^\textnormal{d}(r,q)\right)~&\text{is achievable for}~\text{some $(r,q)$ and}\,\,\tau^\textnormal{u}(r)\!\le\!\tau^\textnormal{u},\nonumber\\
\!&~~~~~~~~~~~~~~\tau^\textnormal{c}(r,q)\!\le\!\tau^\textnormal{c},~\text{and}~\tau^\textnormal{d}(r,q)\!\le\!\tau^\textnormal{d}\big\}.
\end{align}
\end{definition}
\begin{definition}\label{totaltimeremark}(End-to-end execution time) For a given pair $(r,q)$, based on the defined communication and computation latency triplet, the end-to-end execution time is defined as the weighted sum of the NULT, NCT, and NDLT as
\begin{equation}
\tau(r,q) = \tau^\textnormal{u}(r) + \delta_c\tau^\textnormal{c} (r,q) + \delta_d\tau^\textnormal{d}(r,q). \label{totaltime}
\end{equation}
In (\ref{totaltime}), $\delta_c\!=\!\frac{Nm/\eta}{NnB/\log P^\textnormal{u}}$ represents the ratio between the reference time needed to compute over all the input vectors of a user and the reference time needed to upload all the input vectors of a user, while $\delta_d\!=\!\frac{NmB/\log P^\textnormal{d}}{NnB/\log P^\textnormal{u}}$ is ratio between the reference time needed to download all the output vectors of a user and the mentioned upload reference time. \end{definition}
\begin{remark} \label{remarkconvex}
(Convexity of compute-download latency region.) For an input data assignment policy $\{\mathcal{U}_{i,\mathcal{K}^{'}}\}_{i\in\mathcal{M},\mathcal{K}^{'}\subseteq\mathcal{K}}$ with repetition order $r$, fix an input uploading strategy achieving an NULT of $\tau^\textnormal{u}$. Consider now two policies $\pi_1$ and $\pi_2$ that differ may in their computing and download phases, and achieve two NCT-NDLT pairs $\left(\tau_{1}^{\textnormal{c}}, \tau_{1}^{\textnormal{d}}\right)$ and $\left(\tau_{2}^{\textnormal{c}}, \tau_{2}^{\textnormal{d}}\right)$, respectively. For any ratio $\lambda\!\in\![0,1]$, it can be seen that there exists a policy that achieves the NCT-NDLT pair $\lambda\left(\tau_{1}^{\textnormal{c}}, \tau_{1}^{\textnormal{d}}\right) \!+\!(1\!-\!\lambda)\left(\tau_{2}^{\textnormal{c}}, \tau_{2}^{\textnormal{d}}\right)$ for the same NULT $\tau^\textnormal{u}$. To this end, assuming $m$ is sufficiently large, matrix $\mathbf{A}$, correspondingly, all output vectors in (\ref{output}) are split horizontally so that 
$N\lambda m$ and $N(1\!-\!\lambda)m$ outputs can be processed by using policies $\pi_1$ and $\pi_2$, respectively. 
By the linearity of the NCT in (\ref{NCTtau}) and NDLT in (\ref{NDLTtau}) with respect to the output size, the claimed pair of NCT and NDLT is achieved. Thus, the region in (\ref{region}) is convex. Similar arguments were also used in \cite[Lemma 1]{sengupta2017fog}.
\end{remark}
Based on the remark above, the region $\mathscr{T}^{*}(\tau^\textnormal{u})$ in (\ref{region}) is convex thanks to the time- and memory-sharing arguments, while it can be proved that the same is not true for the region of achievable triplets $(\tau^\textnormal{u}, \tau^\textnormal{c}, \tau^\textnormal{d})$. Region $\mathscr{T}^{*}(\tau^\textnormal{u})$ will be adopted to capture the trade-offs between computation and download latencies for a fixed upload latency. Our general goals are to characterize the minimum communication-computation latency triplet, the optimal tradeoff region between computing and download latencies, as well as the minimum end-to-end execution time.
\begin{figure}[t]
\centering
\includegraphics[width=3.87in, height=2.3in]{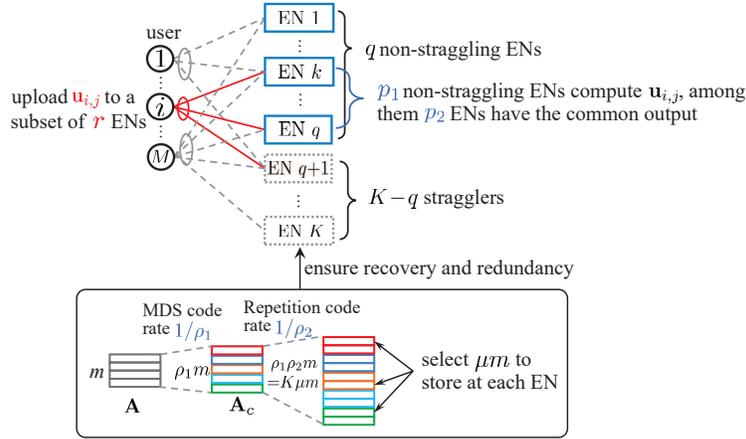}
\vspace{-2mm}
\caption{(Bottom) Hybrid MDS-Repetition coding for matrix $\mathbf{A}$; (Top) Input uploading and edge computing.}
\vspace{-8mm}
\label{mds_repetition}
\end{figure}
\vspace{-4mm}
\section{Main Results}\label{mainresults}
In this section, we introduce a novel task offloading scheme based on the joint design of task assignment, two-way communication, and cascaded coded computing. Then, we study the communication-computation latency triplet and the end-to-end execution time achieved by this scheme. We derive inner and outer bounds on the compute-download latency
region, and then discuss some consequences of the main results in terms of the tradeoffs among
upload, computation, and download latencies. Furthermore, we specialize the main results to a
number of simpler set-ups in order to illustrate the connections with existing works.
\vspace{-2mm}
\subsection{Key Ideas} \label{keyidea}
We start by outlining the main ideas that underpin the proposed scheme. In task assignment, we choose a repetition-recovery order pair $(r,q)$ from a feasible set $\mathcal{R}$ of values. We demonstrate that for any $(r,q)\!\in\!\mathcal{R}$, it is possible to recover all outputs through a suitable design of the system. For any such pair $(r,q)$, as shown in Fig. \ref{mds_repetition}-(bottom), matrix $\mathbf{A}$ is encoded by a cascade of an MDS code of rate $1/\rho_1$ and a repetition code of rate $1/\rho_2$. Of the encoded rows, $\mu m$ different rows are stored at each EN, with each MDS encoded row replicated at $\rho_2$ distinct ENs. As we will prove, the MDS code can alleviate the impact of stragglers on the computation latency by decreasing the admissible values for the number $q$ of non-straggling ENs (see also \cite{Speeding_Up,songzestraggler,improvedJingjing}); while repetition coding can reduce the download latency by enabling cooperative transmission among multiple ENs computing the same outputs\cite{jingjing,KKLMEC}.

In the input upload phase, each user divides its $N$ input vectors into $\binom{K}{r}$ subsets $\{\mathcal{U}_{i,\mathcal{K}^{'}}\}$, with each subset uploaded to all the $r$ ENs in subset $\mathcal{K}^{'}$ for computation. By using interference alignment (IA), at each EN, a total of $M\binom{K-1}{r-1}$ desired subsets of inputs can be successfully decoded with the other $M\binom{K-1}{r}$ undesired subsets of interfering signals being aligned together. Then, as shown in Fig. \ref{mds_repetition}-(top), in the computing phase, each input vector of any user is computed by a subset of $p_1$ non-straggling ENs with $p_1$ being at least $r-(\!K\!-q)$ and at most $\min\{r,q\}$. Therefore, since each encoded row of $\mathbf{A}$ is replicated at a subset of $\rho_2$ ENs, after computation, each MDS-encoded row-vector product result for a user will be replicated at a subset of $p_2$ non-straggling ENs, with $p_2$ being at least $\max\{\rho_2\!-\! K\!+\!p_1,1\}$ and at most $\min\{p_1,\rho_2\}$.

In the output download phase, each subset of $p_2$ ENs computing the same coded outputs can first use zero-forcing (ZF) precoding to null the interfering signal caused by common outputs at a subset of $p_2-\!1$ undesired users. When the number of undesired users does not exceed $p_2-\!1$, i.e., when $M\!-\!1\!\le\!p_2-\!1$, by ZF precoding, each user only receives its desired outputs with all undesired outputs being cancelled out. When this condition is violated, i.e., when $M\!>\!p_2$, after ZF precoding, each output still causes interferences to $M\!-\!p_2$ undesired users. As detailed in \cite{FanXu}, IA can be applied in cascade to the ZF precoders in order to mitigate the impact of these interfering signals. 
\vspace{-1mm}
\subsection{Bounds} \vspace{-1mm}
The scheme summarized above and detailed in Sec. \ref{AchievableScheme} achieves the following latency region.
\begin{theorem} \label{achievableresults}
(Inner bound). For the described MEC network with $M$ users and $K$ ENs, 
each with storage capacity $\mu\!\in\![\frac{1}{K},1]$, the following communication-computation latency triplet $\big(\tau_{\textnormal{a}}^{\textnormal{u}}(r),$ $\tau_{\textnormal{a}}^{\textnormal{c}}(r,q),\tau_{\textnormal{a}}^{\textnormal{d}}(r,q)\big)$ is achievable
\begin{align}
\!\tau_{\textnormal{a}}^{\textnormal{u}}(r)&\!=\! \frac{(M\!-\!1)r \!+\! K}{K}, \label{achieveUpoad} \\
\!\tau_{\textnormal{a}}^{\textnormal{c}}(r,q)&\!=\!\frac{Mr\mu(H_K\!-\!H_{K-q})}{K}\!, \label{achievecompute}\\
\!\tau_{\textnormal{a}}^{\textnormal{d}}(r,q)&\!=\!\sum\limits^{\min\{r,q\}}_{p_1\!=r-\!K\!+q}\!\!\!\!\!B_{p_1}\!\!\left(\sum\limits^{l_\textnormal{max}}_{p_2=l_{p_1}}\!\!\frac{B_{p_2}}{d_{p_1,M,p_2}^{\textnormal{d}}}\!+\!
\!\frac{B_{l_{p_1}\!-1}}{d_{p_1,M,l_{p_1}\!-1}^{\textnormal{d}}}\!\right)\!, \label{achievedownload}
\end{align}
for any repetition order $r$ and recovery order $q$ in the set
\begin{equation} \label{regionR}
\mathcal{R}\!\triangleq\!\big\{(r,q)\!:r\!\in\![K],q\!\in\![K],~\text{and}~ (r\!-\!K\!+\!q)\mu\!\ge\!1\big\},
\end{equation}
where $H_K\!=\!\sum^K_{k=1}1/k$, $H_0\!=\!0$, $B_{p_1}\!=\!\binom{q}{p_1}\binom{K-q}{r-p_1}/\binom{K}{r}$, $B_{p_2}\!=\!\binom{p_1}{p_2}\binom{K-p_1}{\rho_2-p_2}\rho_1/\binom{K}{\rho_2}$, $B_{l_{p_1}\!-1}\!=\!1\!-\!\!\sum^{l_\textnormal{max}}_{p_2=l_{p_1}}\!\!B_{p_2}$, and $d_{p_1,M,p_2}^{\textnormal{d}}$ is given by
\begin{equation}\label{dofdd}
d_{p_1,M,p_2}^{\textnormal{d}}=\left\{
\begin{aligned}
&1, &p_2\!\ge\!M~~~~\\
&\frac{\binom{p_1}{M-1}(M\!-\!1)}{\binom{p_1}{M-1}(M\!-\!1)+1},   &~~p_2\!=\!M\!-\!1\\
&\max\left\{d', \frac{p_2}{M}\right\},          &~~p_2\!\le\!M\!-\!2
\end{aligned}
~,\right.
\end{equation}
with $d'\!\triangleq\!\max_{1\le t\le p_2}\frac{p_1-t+1}{M+p_1-2t+1}$;
\begin{align} \label{theoremcoderate}
\!\!\rho_2\!=\!\inf\!\bigg\{\rho\!:\!\!\binom{K}{\rho}\!-\!\binom{2K\!-\!r\!-\!q}{\rho}\!\ge\! \frac{1}{\rho_1}\binom{K}{\rho},\rho_1\rho\!=\!K\mu,
\rho_1\!\in\!\big\{1,\frac{K\mu}{K\mu\!-\!1},\frac{K\mu}{K\mu\!-\!2}\cdots,K\mu\big\}\!\bigg\};\!\!
\end{align}
and
\begin{align}\label{replicationmin}
\!\!\!l_{p_1} \!\!=\! \inf\!\bigg\{l\!:\!\!\sum\nolimits^{l_\textnormal{max}}_{p_2=l}\!\!B_{p_2}m\!\le\!m, l\!\in\![l_\textnormal{min}\!:\!l_\textnormal{max}], l_\textnormal{max}\!=\!\min\{p_1,\rho_2\}, 
l_\textnormal{min}\!=\!\max\{\rho_2\!-\! K\!+\!p_1,1\} \bigg\}.\!
\end{align}
Therefore, for an NULT $\tau^\textnormal{u}\!=\!\tau_{\textnormal{a}}^{\textnormal{u}}(r)$ given in (\ref{achieveUpoad}) for some $r$, an inner bound $\mathscr{T}_{in}(\tau^\textnormal{u})$ on the compute-download latency region is given by the convex hull of the set $\big\{\!\left(\tau_{\textnormal{a}}^{\textnormal{c}}(r,q),\tau_{\textnormal{a}}^{\textnormal{d}}(r,q)\right)\!:\!q\!\in\!\big[\lceil\frac{1}{\mu}\rceil+K\!-r\!:\!K\big]\big\}$.
\end{theorem}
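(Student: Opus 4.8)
The plan is to prove Theorem~\ref{achievableresults} by constructing the explicit four-phase policy sketched in Section~\ref{keyidea} and bounding the upload, computation, and download latencies one at a time; the convex-hull claim then follows from Remark~\ref{remarkconvex}. Fix $(r,q)\in\mathcal{R}$ as in (\ref{regionR}). For \emph{task assignment and storage}, each user splits its $N$ inputs into $\binom{K}{r}$ equal subsets $\{\mathcal{U}_{i,\mathcal{K}'}\}$ indexed by the $r$-subsets $\mathcal{K}'\subseteq\mathcal{K}$; a counting check gives $|\mathcal{U}_k|=MNr/K$ for every EN, so the repetition order is $r$ and all per-EN loads are equal. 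Matrix $\mathbf{A}$ is encoded with the cascaded MDS($1/\rho_1$)--repetition($1/\rho_2$) code of Fig.~\ref{mds_repetition} with $(\rho_1,\rho_2)$ as in (\ref{theoremcoderate}), placing the $\rho_1 m$ distinct MDS rows on the $\binom{K}{\rho_2}$ possible $\rho_2$-subsets of ENs in a balanced (e.g., cyclic) way so that each EN stores $\mu m$ of them.

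For the \emph{upload latency}, the assignment induces an X-multicast channel with $M$ transmitters, $K$ receivers, and one message per (user, $r$-subset) pair, of which each EN must decode $M\binom{K-1}{r-1}$ while $M\binom{K-1}{r}$ are interference. Applying asymptotic interference alignment in the high-SNR regime confines the interference at each EN to a negligible fraction of the signal space, giving a per-user DoF of $K/((M-1)r+K)$, i.e., a throughput of $\frac{K}{(M-1)r+K}\log P^\textnormal{u}$ bits per channel use per user; since each user sends $NnB$ bits, $\mathbb{E}_{\mathbf{H}^\textnormal{u}}[T^\textnormal{u}]$ scales as $\frac{(M-1)r+K}{K}\cdot\frac{NnB}{\log P^\textnormal{u}}$, and (\ref{NULTtau}) yields (\ref{achieveUpoad}). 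For the \emph{computation latency}, equal loads $\mu m|\mathcal{U}_k|=\mu m MNr/K$ make $\mathcal{K}_q$ the set of the $q$ ENs with the smallest $\omega_k$, so $T^\textnormal{c}=\mu m\frac{MNr}{K}\{\omega_k\}_{q:K}$; using the exponential order-statistic identity $\mathbb{E}[\{\omega_k\}_{q:K}]=(H_K-H_{K-q})/\eta$ in (\ref{NCTtau}) gives (\ref{achievecompute}). Decodability after only $q$ ENs finish holds because, for any input assigned to an $r$-subset $\mathcal{K}'$, at least $r-K+q$ of those ENs are non-stragglers, and, by the balanced placement together with the defining inequality of (\ref{theoremcoderate})---which, using $(r-K+q)\mu\ge1$, is exactly the statement that any $r-K+q$ ENs jointly store at least $m$ distinct MDS rows---the collected coded outputs contain at least $m$ MDS-coded products of each $\mathbf{v}_{i,j}$, recovered by inverting an $m\times m$ MDS submatrix.

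For the \emph{download latency}, fix the realized $\mathcal{K}_q$ and group the (input, MDS-row) pairs first by $p_1=|\mathcal{K}'\cap\mathcal{K}_q|$ and then by the number $p_2$ of those $p_1$ active ENs storing the given MDS row; since the $\omega_k$ are i.i.d.\ (so $\mathcal{K}_q$ is a uniform $q$-subset) and the inputs are balanced over $r$-subsets, the pair-fractions are $B_{p_1}$ and, within each $p_1$, $B_{p_2}$, over the ranges in (\ref{replicationmin}). Because only $m$ MDS products of each $\mathbf{v}_{i,j}$ are needed and $d^\textnormal{d}_{p_1,M,p_2}$ is increasing in $p_2$, for each $p_1$ I would select the MDS rows of largest $p_2$ first; this defines the cutoff $l_{p_1}$ of (\ref{replicationmin}) and the residual weight $B_{l_{p_1}-1}=1-\sum_{p_2\ge l_{p_1}}B_{p_2}$. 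Each group $(p_1,p_2)$ is then delivered over the cooperative X channel in which the message is known at the corresponding $p_2$-subset of ENs: zero-forcing cancels it at $p_2-1$ unintended users, and for $p_2<M-1$ asymptotic interference alignment over the residual $M-p_2$ interferers (following \cite{FanXu}) attains the per-user DoF $d^\textnormal{d}_{p_1,M,p_2}$ of (\ref{dofdd}), including the $d'$ refinement; dividing each group's per-user symbol count by its DoF, summing over $p_1$ and $p_2$, and normalizing as in (\ref{NDLTtau}) yields (\ref{achievedownload}), while $\mathrm{P}_\textnormal{e}\to0$ as $B\to\infty$ because both the IA-based channel codes and the MDS decoding are exact in the limit. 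Finally, for fixed $r$, letting $q$ range over $\{q:(r,q)\in\mathcal{R}\}$ gives a finite set of achievable $(\tau^\textnormal{c}_\textnormal{a},\tau^\textnormal{d}_\textnormal{a})$ at the common NULT $\tau^\textnormal{u}_\textnormal{a}(r)$, and Remark~\ref{remarkconvex} (splitting $\mathbf{A}$ horizontally for memory-/time-sharing) makes their convex hull achievable, giving $\mathscr{T}_{in}(\tau^\textnormal{u})$.

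I expect the download step to be the main obstacle: establishing the DoF values $d^\textnormal{d}_{p_1,M,p_2}$ via the zero-forcing-plus-interference-alignment construction with the inner $d'$ optimization, checking that the greedy largest-$p_2$ selection is consistent with MDS decodability and produces exactly $l_{p_1}$ and $B_{l_{p_1}-1}$, and carrying out the combinatorial bookkeeping that ties the per-group symbol counts to $B_{p_1}$ and $B_{p_2}$. The X-multicast interference-alignment argument for the upload and the well-posedness of the parameters $(\rho_1,\rho_2)$ in (\ref{theoremcoderate}) are secondary but still require care. The full construction and calculations are deferred to Section~\ref{AchievableScheme}.
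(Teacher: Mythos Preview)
Your proposal is correct and follows essentially the same construction and analysis as the paper's proof in Section~\ref{AchievableScheme}: the symmetric $\binom{K}{r}$-subset task assignment, the cascaded MDS--repetition storage with the subset-based placement, the X-multicast uplink with asymptotic IA (your per-user DoF $K/((M-1)r+K)$ is just the per-receiver DoF $Mr/(Mr+K-r)$ of \cite{FanXu,DOfNiesen} renormalized), the exponential order-statistic identity for the NCT, the $(p_1,p_2)$ grouping with the cooperative X-channel ZF+IA scheme of \cite{FanXu} for the download, the greedy largest-$p_2$ selection yielding $l_{p_1}$ and $B_{l_{p_1}-1}$, and the convex-hull argument via Remark~\ref{remarkconvex}. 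The obstacles you flag are exactly the places where the paper invokes \cite[Lemma~1]{FanXu} and does the combinatorial counting explicitly.
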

\begin{proof}The proof of Theorem \ref{achievableresults} is given in Section \ref{AchievableScheme}. \end{proof}
By Theorem \ref{achievableresults}, an achievable end-to-end execution time is given as follows.
\begin{corollary}
An achievable end-to-end execution time for $(r,q)\!\in\!\mathcal{R}$
is given as $\tau_{\textnormal{a}}(r,q)\!=\!\tau_{\textnormal{a}}^{\textnormal{u}}(r)\!+\delta_c\tau_{\textnormal{a}}^{\textnormal{c}}(r,q)+\delta_d\tau_{\textnormal{a}}^{\textnormal{d}}(r,q)$, where $\tau_{\textnormal{a}}^{\textnormal{u}}(r)$, $\tau_{\textnormal{a}}^{\textnormal{c}}(r,q)$, and $\tau_{\textnormal{a}}^{\textnormal{d}}(r,q)$ are given in (\ref{achieveUpoad}), (\ref{achievecompute}), and (\ref{achievedownload}), respectively.
\end{corollary}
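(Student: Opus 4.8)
The plan is to derive the corollary directly from Theorem~\ref{achievableresults} together with the definition of the end-to-end execution time in Definition~\ref{totaltimeremark}, so that essentially no new work is required. Theorem~\ref{achievableresults} guarantees that, for each pair $(r,q)\in\mathcal{R}$, there is a feasible offloading policy with repetition order $r$ and recovery order $q$ whose NULT, NCT, and NDLT equal $\tau_{\textnormal{a}}^{\textnormal{u}}(r)$, $\tau_{\textnormal{a}}^{\textnormal{c}}(r,q)$, and $\tau_{\textnormal{a}}^{\textnormal{d}}(r,q)$ as in (\ref{achieveUpoad})--(\ref{achievedownload}).

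First I would recall from Section~\ref{problemformu} that the offloading procedure executes the input uploading, edge computing, and output downloading phases sequentially, so the physical latency of any policy is the sum $T^{\textnormal{u}}+T^{\textnormal{c}}+T^{\textnormal{d}}$. Normalizing this sum by the common reference quantity $NnB/\log P^{\textnormal{u}}$ and inserting the conversion factors $\delta_c$ and $\delta_d$ from Definition~\ref{totaltimeremark}, which rescale the computation and download reference times to the upload reference time, reproduces exactly the weighted sum $\tau^{\textnormal{u}}(r)+\delta_c\tau^{\textnormal{c}}(r,q)+\delta_d\tau^{\textnormal{d}}(r,q)$ of (\ref{totaltime}). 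Substituting the achievable component latencies supplied by Theorem~\ref{achievableresults} then gives $\tau_{\textnormal{a}}(r,q)=\tau_{\textnormal{a}}^{\textnormal{u}}(r)+\delta_c\tau_{\textnormal{a}}^{\textnormal{c}}(r,q)+\delta_d\tau_{\textnormal{a}}^{\textnormal{d}}(r,q)$, as claimed.

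The only subtlety worth flagging is that the three normalized latencies in Definition~\ref{defenition2} involve iterated limits taken in different orders over $B$, $m$, $P^{\textnormal{u}}$, and $P^{\textnormal{d}}$; one should therefore observe that a single sequence of policies attains all three limits at once, which holds because the task assignment $\{\mathcal{U}_{i,\mathcal{K}^{'}}\}$ is fixed by $r$, the recovery order is fixed by $q$, and the uplink, computing, and downlink sub-policies can be designed independently. I do not anticipate any real obstacle: the corollary is a direct bookkeeping consequence of Theorem~\ref{achievableresults} and the definition (\ref{totaltime}).
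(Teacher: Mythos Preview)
Your proposal is correct and matches the paper's approach: the paper presents this corollary as an immediate consequence of Theorem~\ref{achievableresults} and Definition~\ref{totaltimeremark}, with no separate proof beyond the introductory sentence ``By Theorem~\ref{achievableresults}, an achievable end-to-end execution time is given as follows.'' Your additional remark about the iterated limits is a fair observation but goes beyond what the paper addresses.
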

We also have the following converse.
\begin{theorem} \label{lower_bound}
(Converse). For the same MEC network, the set of all admissible pairs $(r,q)$ is included in the set $\mathcal{R}$ in (\ref{regionR}). Furthermore, any feasible communication-computation latency triplet $\left(\tau^\textnormal{u}(r),\tau^\textnormal{c}(r,q),\tau^\textnormal{d}(r,q)\right)$ for pairs $(r,q)$ in $\mathcal{R}$ is lower bounded as
\begin{align}
\!\tau^\textnormal{u}(r)&\!\ge\! \tau_{\textnormal{a}}^{\textnormal{u}}(r),\!\label{lowerupload}\\
\!\tau^\textnormal{c}(r,q)&\!\ge\!\tau_l^{\textnormal{c}}(r,q)\!=\!\max\limits_{t\in[q]}\!\frac{(H_K\!-\! H_{K-q+t-1})(r\!-\!K\!+\!t)^{+}M\mu }{t}\!,\!\label{lowercompute} \\
\!\tau^\textnormal{d}(r,q)&\!\ge\!\tau_l^{\textnormal{d}}(r,q)\!=\!\max_{t\in\{1,\cdots,\min\{q,M\}\}}\!\frac{M\!-\!(M\!-\!t)(q\!-\!t)\frac{r}{K}\mu}{t}\!.\! \label{lowerdownload}
\end{align}
Therefore, for an NULT $\tau^\textnormal{u}\!=\!\tau_{\textnormal{a}}^{\textnormal{u}}(r)$ in (\ref{achieveUpoad}) for some $r$, an outer bound $\mathscr{T}_{out}(\tau^\textnormal{u})$ of the compute-download latency region is given by the convex hull of set
$\big\{\!\!\left(\tau_l^{\textnormal{c}}(r,q),\tau_l^{\textnormal{d}}(r,q)\right)\!\!:\!q\!\in\!\!\big[\lceil\!\frac{1}{\mu}\!\rceil\!+\!K\!-r\!:\!\!K\big]\!\big\}$.
\end{theorem}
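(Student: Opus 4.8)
The plan is to establish the three latency lower bounds in \eqref{lowerupload}--\eqref{lowerdownload} separately, together with the admissibility claim that any feasible $(r,q)$ must lie in $\mathcal{R}$, and then to deduce the outer bound $\mathscr{T}_{out}(\tau^\textnormal{u})$ by the same time- and memory-sharing convexity argument used in Remark \ref{remarkconvex}. For the admissibility claim, I would argue that if fewer than $r-K+q$ ENs among the fastest $q$ jointly hold any given input vector, then those $q$ ENs collectively store at most $(r-K+q-1)\mu m < m$ rows of $\mathbf{A}$ after accounting for which rows are replicated where; more carefully, each input vector is assigned to $r$ ENs on average, so at least $r-(K-q)$ of them are non-straggling, and the coded outputs available at $\mathcal{K}_q$ must span the full row space of $\mathbf{A}$, which forces the total number of distinct MDS-coded rows seen by $\mathcal{K}_q$ to be at least $m$, giving the constraint $(r-K+q)\mu\ge 1$.

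For the uplink bound \eqref{lowerupload}, the key observation is that the input-uploading phase realizes an X-multicast channel in which each user's message must be delivered to the $r$ ENs that are assigned its inputs; I would invoke a cut-set / genie-aided DoF argument on this channel. Concretely, giving a genie the received signals of all but one EN and using the fact that each user wishes to convey $NnB$ bits that must be decoded at $r$ ENs, the sum-DoF is bounded so that the per-user normalized upload time is at least $\frac{(M-1)r+K}{K}$; this matches $\tau^\textnormal{u}_{\textnormal{a}}(r)$ exactly, so the achievable scheme is optimal here. For the computation bound \eqref{lowercompute}, I would fix any subset of $t$ ENs among the fastest $q$ and lower-bound $\mathbb{E}[T^\textnormal{c}]$ by the expected maximum computation time over those $t$ ENs; since $T^\textnormal{c}_k$ is a scaled exponential with mean proportional to $\mu m|\mathcal{U}_k|$, and $\sum_k|\mathcal{U}_k|=rMN$ with at least $(r-K+t)^+MN$ input vectors necessarily loaded onto any $t$-subset that must cover all outputs, the expected max of $t$ such exponentials contributes the harmonic-number factor $H_K-H_{K-q+t-1}$, and optimizing over $t\in[q]$ yields the stated bound.

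The download bound \eqref{lowerdownload} is where the main work lies: the downlink is a cooperative X channel with $q$ transmitting ENs and $M$ users, where transmitter $k$'s cooperation set is constrained by which coded outputs it holds, i.e., by the repetition structure of order $r$ and the recovery order $q$. I would generalize the converse argument of \cite{sengupta2017fog} (and the X-channel DoF bounds of \cite{FanXu}): pick a subset of $t\le\min\{q,M\}$ users, give a genie the outputs destined to the other $M-t$ users, and bound the number of degrees of freedom these $t$ users can extract given that any message is available at on average $\frac{r}{K}\mu$ fraction of the ENs (measured in row-blocks of $\mathbf{A}$); the fraction of each desired message that is "pre-cached" at the transmitters through replication subtracts off, giving the numerator $M-(M-t)(q-t)\frac{r}{K}\mu$ and denominator $t$, then maximize over $t$. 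The hard part will be making the genie-aided counting in the downlink precise while correctly tracking how the repetition order $r$, the recovery order $q$, and the storage fraction $\mu$ jointly determine the effective transmitter-cooperation pattern — in particular, arguing that $(q-t)\frac{r}{K}\mu$ is exactly the right "cache gain" term requires carefully relating the per-output replication count $p_2$ to the averaged quantity $\frac{r}{K}\mu$ and handling the residual interference alignment that remains when $M>p_2$. Once all three bounds and the $\mathcal{R}$-membership are in hand, the convex-hull outer bound $\mathscr{T}_{out}(\tau^\textnormal{u})$ follows immediately by taking, for the fixed $\tau^\textnormal{u}=\tau^\textnormal{u}_{\textnormal{a}}(r)$, the union over admissible $q$ and closing under the memory-sharing operation of Remark \ref{remarkconvex}.
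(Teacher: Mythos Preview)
Your overall architecture (separate bounds for NULT, NCT, NDLT, plus admissibility, then convex hull) matches the paper, and your admissibility argument and the uplink genie idea are in the right direction. But two of your three latency arguments have concrete gaps that, as stated, would not yield the claimed bounds.

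\textbf{NCT bound.} You propose to fix $t$ ENs among the fastest $q$ and take the ``expected max of $t$ such exponentials,'' asserting this gives the factor $H_K-H_{K-q+t-1}$. It does not: the expected maximum of $t$ i.i.d.\ exponentials with rate $\eta$ is $H_t/\eta$, not $(H_K-H_{K-q+t-1})/\eta$. The paper does something different and sharper: it separates the deterministic workloads $|\mathcal{U}_k|$ from the random per-row times $\omega_k$ via the order-statistics product inequality $\{x_ky_k\}_{q:K}\ge\{x_k\}_{t:K}\cdot\{y_k\}_{q-t+1:K}$, and then uses the $(q{-}t{+}1)$-th order statistic of the full sample of $K$ exponentials, whose mean is exactly $(H_K-H_{K-q+t-1})/\eta$. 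The $t$-th smallest workload is then lower-bounded by the average $(r-K+t)^+MN/t$ using the constraint $\sum_k|\mathcal{U}_k|=rMN$. Your ``$t$-subset must cover all outputs'' reasoning is not the right justification for the $(r-K+t)^+$ term and does not produce the correct harmonic factor.

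\textbf{NDLT bound.} Your proposed genie gives ``the outputs destined to the other $M-t$ users.'' That is not the genie that produces the term $(M-t)(q-t)\tfrac{r}{K}\mu$. The paper's genie, following \cite{sengupta2017fog}, consists of the \emph{received signals} of $t$ users together with the \emph{stored computation results} at a subset $\mathcal{Q}_{q-t}$ of $q-t$ ENs; it is this transmitter-side storage that contributes the $(q-t)$ factor. After bounding $I(\cdot)$ and applying Fano, one gets $\tau^\textnormal{d}\ge \big(M-\sum_{k\in\mathcal{Q}_{q-t}}\sum_{i\in\mathcal{M}_{M-t}}\gamma_{i,k}\mu\big)/t$, and the clean constant $\tfrac{r}{K}$ emerges only after a double averaging: first over all $\mathcal{Q}_{q-t}\subseteq\mathcal{K}_q$, then over all $\mathcal{K}_q\subseteq\mathcal{K}$ (to handle the random straggler set), using $\sum_{k\in\mathcal{K}}\gamma_{i,k}=r$. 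With only receiver-side genie information you will not see the $(q-t)$ storage term, and without the averaging over $\mathcal{K}_q$ you will not collapse the assignment-dependent $\gamma_{i,k}$'s to $r/K$. Your acknowledgement that ``the hard part will be making the genie-aided counting precise'' is correct, but the specific genie you chose is the wrong one.

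For the NULT bound, your sketch is roughly aligned with the paper, but note that the paper's argument is a single-receiver genie (messages, not received signals) with a degraded-channel construction, followed by a min--max optimization over all task assignments $\{\gamma_{i,k}\}$ subject to $\sum_k\gamma_{i,k}=r$; the optimal $\gamma_{i,k}=r/K$ is what finally yields $\tfrac{(M-1)r+K}{K}$. You should make that optimization step explicit.
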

\begin{proof}The proof of Theorem \ref{lower_bound} is available in Appendix.\end{proof} 
\begin{figure*}
\vspace{-5mm}
\centering
\subfigure[$\tau^\textnormal{u}\!=\!6.4$ ($r\!=\!6$)]
{\label{bR} 
 \includegraphics[width=3in, height=2.3in]{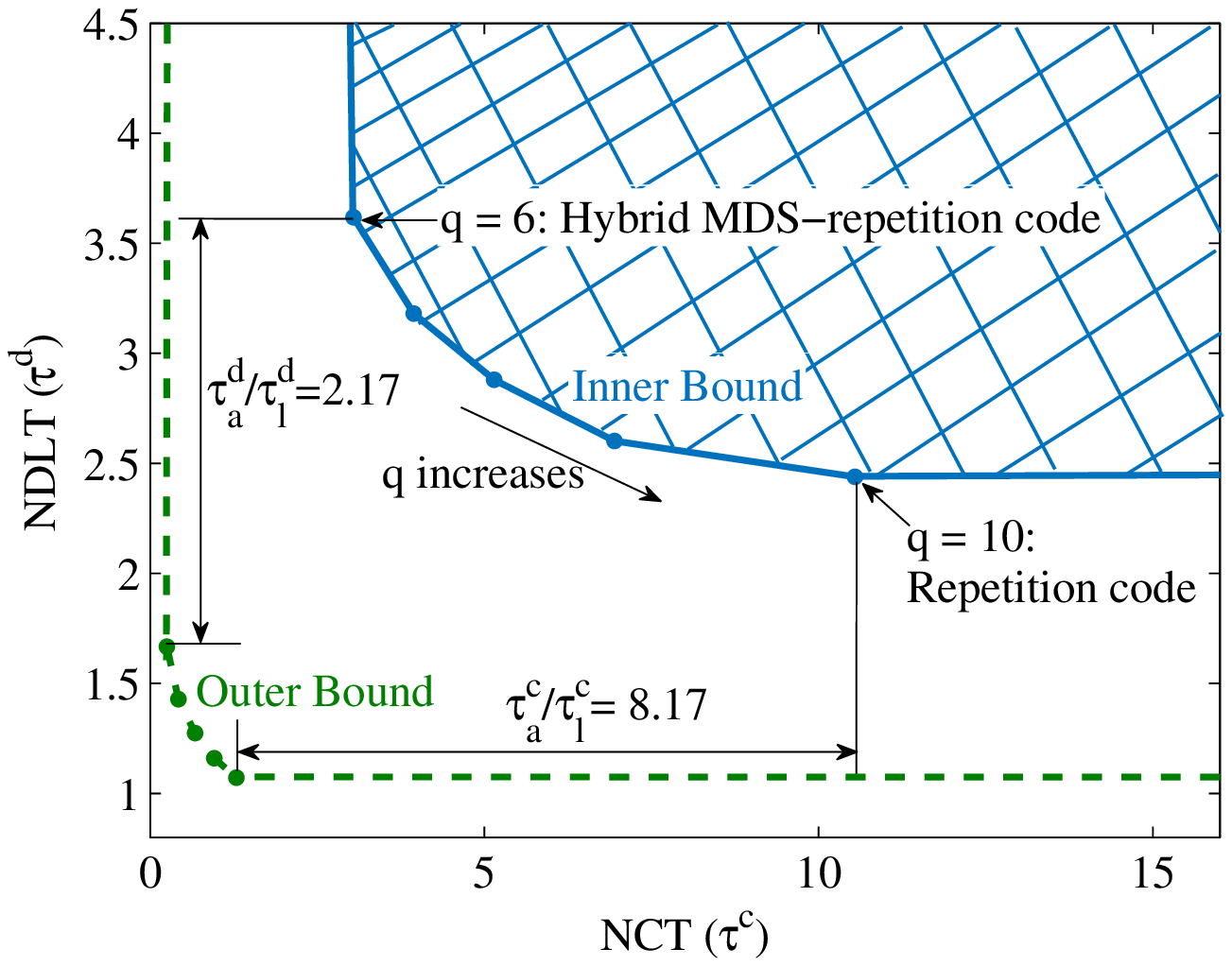}}
\hspace{4mm}
\subfigure[$\tau^\textnormal{u}\!=\!9.1$ ($r\!=\!9$)]
{\label{sR} 
 \includegraphics[width=3in, height=2.3in]{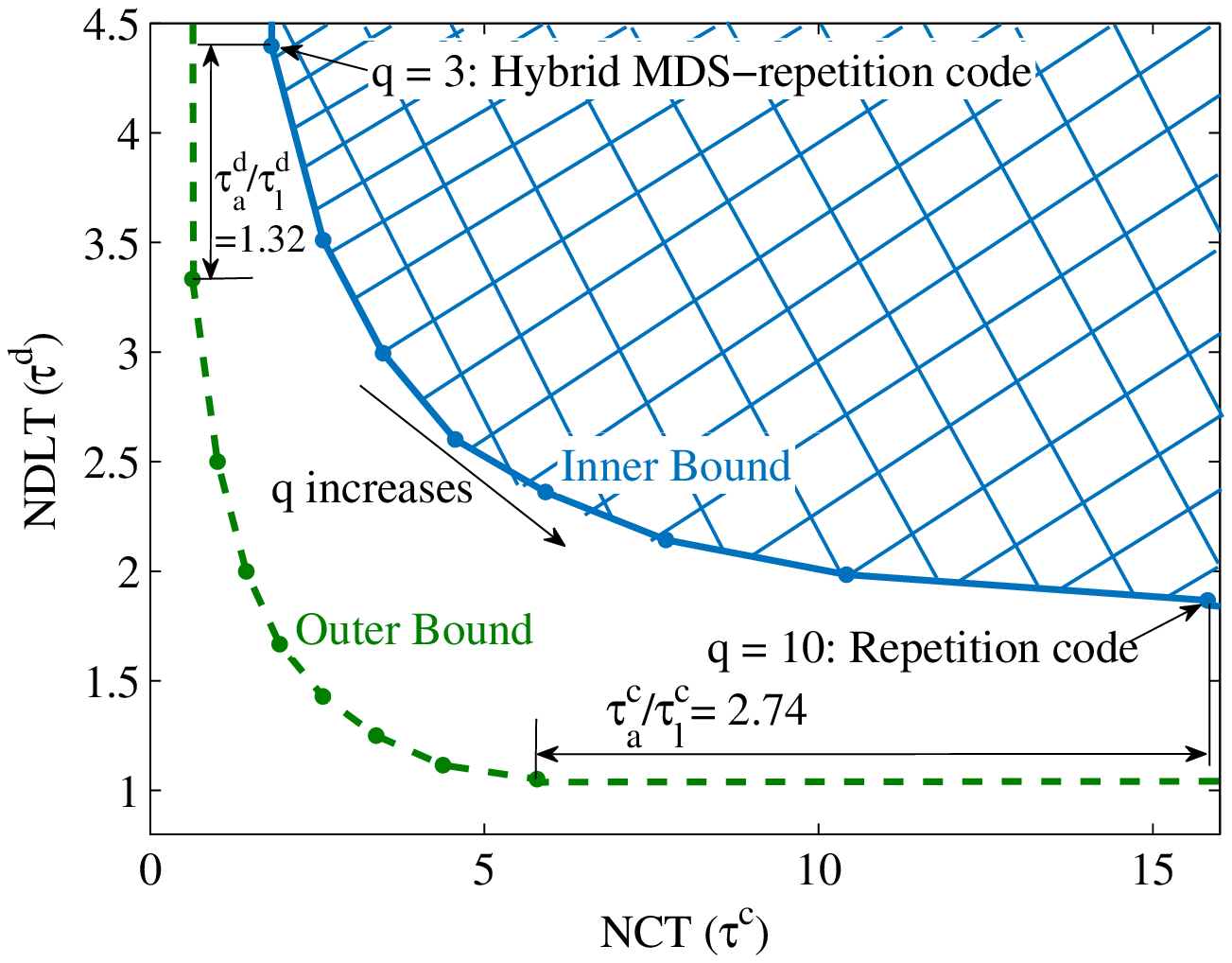}}
 \vspace{-2mm}
 \caption{Compute-download latency region bounds for $M\!=\!K\!=\!10$, $\mu\!=\!3/5$, $N=\binom{r}{10}$, and $m\!=\!\rho_2\binom{\rho_2}{10}$.}
\label{latencytriples} 
\vspace{-8mm}
\end{figure*}
Fig. \ref{latencytriples} plots the derived inner and outer bounds on the compute-download latency region $\mathscr{T}^{*}(\tau^\textnormal{u})$ for the case with $M\!\!=\!\!K\!\!=\!\!10$ and two different values of $\tau^\textnormal{u}$. For instance, in Fig. \ref{latencytriples}-(a), for a small NULT $\tau^\textnormal{u}\!=\!6.4$ at $r\!=\!6$, we have the achievable NCT-NDLT pairs $(\tau_{\textnormal{a}}^{\textnormal{c}},\tau_{\textnormal{a}}^{\textnormal{d}})\!=\!(3.04,3.62)$ at $q\!=\!6$ and $(\tau_{\textnormal{a}}^{\textnormal{c}},\tau_{\textnormal{a}}^{\textnormal{d}})\!=\!(10.54,2.44)$ at $q\!=\!10$; while, in Fig. \ref{latencytriples}-(b), for a large NULT $\tau^\textnormal{u}\!=\!9.1$ at $r\!=\!9$, we have two smaller latency pairs $(\tau_{\textnormal{a}}^{\textnormal{c}},\tau_{\textnormal{a}}^{\textnormal{d}})\!=\!(2.59,3.51)$ at $q\!=\!4$ and $(\tau_{\textnormal{a}}^{\textnormal{c}},\tau_{\textnormal{a}}^{\textnormal{d}})\!=\!(7.72,2.14)$ at $q\!=\!8$. First, we observe that for both cases,
\emph{as $q$ increases, the NDLT is reduced at the expense of an increasing NCT}: A larger $q$ enables more opportunities for transmission cooperation at the ENs during output downloading, while increasing, on average, the time required for $q$ ENs to complete their tasks. Furthermore, comparing Fig. \ref{latencytriples}-(a) with Fig. \ref{latencytriples}-(b), we also see that \emph{allowing for a longer upload time $\tau^\textnormal{u}$ increases the compute-download latency region}. This is because  when more information is uploaded to ENs over a larger latency $\tau^\textnormal{u}$, on the one hand, users can wait for fewer ENs to finish their computing tasks, reducing the NCT; and, on the other hand, the increased duplication of outputs also increases opportunities for transmission cooperation to reduce the NDLT.
\begin{corollary}
The minimum end-to-end execution time $\tau(r,q)$ for $(r,q)\!\in\!\mathcal{R}$
is lower bounded as $\tau(r,q)\!\ge\!\tau_l(r,q)\!=\!\tau_{\textnormal{a}}^{\textnormal{u}}(r)+\delta_c\tau_l^{\textnormal{c}}(r,q)+\delta_d\tau_l^{\textnormal{d}}(r,q)$, where $\tau_{\textnormal{a}}^{\textnormal{u}}(r)$, $\tau_l^{\textnormal{c}}(r,q)$, and $\tau_l^{\textnormal{d}}(r,q)$ are given in (\ref{achieveUpoad}), (\ref{lowercompute}), and (\ref{lowerdownload}), respectively.
\end{corollary}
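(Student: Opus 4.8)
The plan is to obtain the bound on the end-to-end execution time $\tau(r,q)$ by simply combining the three component bounds established in Theorem~\ref{lower_bound}, invoking the definition \eqref{totaltime} of $\tau(r,q)$ as the weighted sum $\tau^\textnormal{u}(r)+\delta_c\tau^\textnormal{c}(r,q)+\delta_d\tau^\textnormal{d}(r,q)$. Since $\delta_c$ and $\delta_d$ are fixed nonnegative constants (they depend only on the reference quantities $N,m,n,B,\eta,P^\textnormal{u},P^\textnormal{d}$, not on the policy), the weighted sum is monotone in each argument. First I would fix an arbitrary feasible policy attaining a triplet $\left(\tau^\textnormal{u}(r),\tau^\textnormal{c}(r,q),\tau^\textnormal{d}(r,q)\right)$ with $(r,q)\in\mathcal{R}$; by Theorem~\ref{lower_bound} this triplet is admissible only if $(r,q)\in\mathcal{R}$, and it satisfies the three inequalities \eqref{lowerupload}, \eqref{lowercompute}, and \eqref{lowerdownload} simultaneously. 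Multiplying \eqref{lowercompute} by $\delta_c\ge 0$ and \eqref{lowerdownload} by $\delta_d\ge 0$ and adding all three inequalities to \eqref{lowerupload} yields
\begin{equation}
\tau(r,q)=\tau^\textnormal{u}(r)+\delta_c\tau^\textnormal{c}(r,q)+\delta_d\tau^\textnormal{d}(r,q)\ge \tau_{\textnormal{a}}^{\textnormal{u}}(r)+\delta_c\tau_l^{\textnormal{c}}(r,q)+\delta_d\tau_l^{\textnormal{d}}(r,q)=\tau_l(r,q),\nonumber
\end{equation}
which is exactly the claimed bound.

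The one subtlety worth checking is that the three lower bounds in Theorem~\ref{lower_bound} hold \emph{jointly} for the same policy, rather than only individually for policies separately optimized for each metric; this is immediate because Theorem~\ref{lower_bound} states that \emph{any} feasible triplet for a pair $(r,q)\in\mathcal{R}$ satisfies all three inequalities at once, so no tension arises. A second routine point is that $\delta_c,\delta_d\ge 0$, which is clear from their definitions as ratios of (positive) reference latencies in Definition~\ref{totaltimeremark}. There is essentially no obstacle here: the corollary is a direct, linearity-based consequence of Theorem~\ref{lower_bound}, and the only thing to be careful about is preserving the direction of the inequalities under multiplication by the nonnegative weights $\delta_c,\delta_d$ and under summation. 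I would therefore present this as a one-line deduction, possibly noting explicitly that it holds for every $(r,q)\in\mathcal{R}$ and hence, after taking the convex hull over $q$ as in the outer-bound construction of Theorem~\ref{lower_bound}, also gives a lower bound on the end-to-end execution time optimized over the recovery order for a fixed NULT.
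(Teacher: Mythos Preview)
Your proposal is correct and matches the paper's treatment: the corollary is stated immediately after Theorem~\ref{lower_bound} without an explicit proof, as it is a direct consequence of summing the three componentwise lower bounds \eqref{lowerupload}--\eqref{lowerdownload} with nonnegative weights $1,\delta_c,\delta_d$. Your remarks about the bounds holding jointly for any feasible policy and about $\delta_c,\delta_d\ge 0$ are exactly the (trivial) verifications needed.
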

\subsection{Optimality}
The following lemma characterizes the optimality of the proposed scheme.
\begin{lemma} \label{gapLemma}
(Optimality). For any triplet $\left(\tau^\textnormal{u}_\textnormal{a}(r),\tau^\textnormal{c}(r,q),\tau^\textnormal{d}(r,q)\right)$ with NULT $\tau^\textnormal{u}_\textnormal{a}(r)$ in (\ref{achieveUpoad}), it is not possible to reduce the achievable NULT $\tau^\textnormal{u}$ while still guaranteeing the feasibility of a triplet $(\tau^\textnormal{u},\tau^\textnormal{c}(r,q),\tau^\textnormal{d}(r,q))$. Furthermore, for a sufficiently large NULT $\tau^\textnormal{u}\!\ge\!\tau_{\textnormal{a}}^{\textnormal{u}}(K\!-\!n_1)$ and small recovery order $q\!\le\!K(1\!-\!1/n_2)\!+\!1$, with integers   $0\!\le\!n_1\!\!<\!q/2$ and $n_2\!\ge\!1$, the multiplicative gap between the achievable NCT in (\ref{achievecompute}) and its lower bound $\tau_l^{\textnormal{c}}$ in (\ref{lowercompute}) satisfies the inequality
\begin{equation}\label{gapnct}
\tau_{\textnormal{a}}^{\textnormal{c}}/\tau_l^{\textnormal{c}}\!\le\!(1+n_1) (1+n_2).
\end{equation}
Finally,  for a sufficiently large NULT $\tau^\textnormal{u}\!\ge\!\tau_{\textnormal{a}}^{\textnormal{u}}(K\!-\!n)$, with integer $n\!\ge\! 0$, the multiplicative gap between the achievable NDLT in (\ref{achievedownload}) and its lower bound $\tau_l^{\textnormal{d}}$ in (\ref{lowerdownload}) satisfies the inequality
\begin{equation} \label{gapndlt}
\tau_{\textnormal{a}}^{\textnormal{d}}/\tau_l^{\textnormal{d}}\!\le\!2(1\!+\!n\mu),
\end{equation}
and hence, if $\tau^\textnormal{u}\!\ge\!\tau_{\textnormal{a}}^{\textnormal{u}}(K)$, we have $\tau_{\textnormal{a}}^{\textnormal{d}}/\tau_l^{\textnormal{d}}\!\le\!2$. In the special case $\mu\!=\!1$, for  a sufficiently large NULT $\tau^\textnormal{u}\!\ge\!\tau_{\textnormal{a}}^{\textnormal{u}}(M\!+\!K\!-q)$, we have $\tau_{\textnormal{a}}^{\textnormal{d}}\!=\tau_l^{\textnormal{d}}\!=\!1$ that is optimal; for a smaller NULT $\tau_{\textnormal{a}}^{\textnormal{u}}(K\!-\!n)\!\le\!\tau^\textnormal{u}\!<\!\tau_{\textnormal{a}}^{\textnormal{u}}(M\!+\!K\!-\!q)$, with integer $\!q\!-\!M\!<\!n\!\le\!q\!-\!1$, we have $\tau_{\textnormal{a}}^{\textnormal{d}}/\tau_l^{\textnormal{d}}<\!n\!+\!1$.
\end{lemma}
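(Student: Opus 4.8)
\textbf{Proof plan for Lemma \ref{gapLemma}.}

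The plan is to establish the three assertions separately, each by comparing the achievable expressions of Theorem \ref{achievableresults} with the lower bounds of Theorem \ref{lower_bound}.

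First, the claim on NULT optimality is immediate from the converse: inequality (\ref{lowerupload}) states that any feasible triplet with recovery order $q$ and repetition order $r$ obeys $\tau^\textnormal{u}(r)\!\ge\!\tau_\textnormal{a}^\textnormal{u}(r)$, so the achievable NULT cannot be lowered without changing $(r,q)$; I would just cite this directly and note that the pair $(\tau^\textnormal{c}(r,q),\tau^\textnormal{d}(r,q))$ fixes $r$ as an admissible parameter via $\mathcal{R}$ in (\ref{regionR}).

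Second, for the NCT gap (\ref{gapnct}) I would take the ratio of $\tau_\textnormal{a}^\textnormal{c}$ in (\ref{achievecompute}) to the lower bound $\tau_l^\textnormal{c}$ in (\ref{lowercompute}). The achievable side is $\frac{Mr\mu(H_K-H_{K-q})}{K}$, and the lower bound is a maximum over $t\in[q]$ of $\frac{(H_K-H_{K-q+t-1})(r-K+t)^+M\mu}{t}$; I would lower-bound that maximum by choosing a convenient $t$, say $t=q-n_1$ (legitimate since $n_1<q/2\le q-1$), which forces $(r-K+t)^+=(r-K+q-n_1)^+$. The hypothesis $\tau^\textnormal{u}\ge\tau_\textnormal{a}^\textnormal{u}(K-n_1)$ translates via (\ref{achieveUpoad}) into $r\ge K-n_1$, making this term equal to $q-2n_1>0$. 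Then $\frac{\tau_\textnormal{a}^\textnormal{c}}{\tau_l^\textnormal{c}}\le\frac{r(H_K-H_{K-q})(q-n_1)}{K(H_K-H_{K-n_1-1})(q-2n_1)\cdot\frac{r-K+q-n_1}{r}}$ — wait, I need to be careful with the $r/K$ versus $(r-K+t)/t$ bookkeeping, so the cleaner route is to write the ratio as a product of three manageable factors: $\frac{r}{r-K+q-n_1}$ (bounded using $r\ge K-n_1$ and the constraint $(r-K+q)\mu\ge1$), the harmonic factor $\frac{H_K-H_{K-q}}{H_K-H_{K-n_1-1}}$, and the arithmetic factor $\frac{q-n_1}{q}$. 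The harmonic factor is where the $(1+n_2)$ enters: using $H_K-H_{K-q}=\sum_{j=K-q+1}^{K}1/j$ and the small-$q$ hypothesis $q\le K(1-1/n_2)+1$, one bounds $K-q+1\ge K/n_2$, and a comparison of the two harmonic-difference sums (the numerator has $q$ terms, the denominator has $n_1+1$ terms all of size roughly $1/K$) yields a factor at most $(1+n_1)(1+n_2)$ after the arithmetic and $r$-factors are absorbed. I expect the delicate part to be making these harmonic-sum estimates tight enough to land exactly at $(1+n_1)(1+n_2)$ rather than something looser.

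Third, for the NDLT gap (\ref{gapndlt}) I would start from the observation, established in the achievability analysis of Section \ref{AchievableScheme}, that when $\tau^\textnormal{u}\ge\tau_\textnormal{a}^\textnormal{u}(K-n)$, i.e. $r\ge K-n$, each coded output is replicated at a number $p_2$ of non-straggling ENs that is large enough that the per-output DoF $d_{p_1,M,p_2}^\textnormal{d}$ in (\ref{dofdd}) is at least $p_2/M$, and the replication parameter $\rho_2$ from (\ref{theoremcoderate}) satisfies $\rho_2\ge K\mu-n\mu$ roughly; combining the $B_{p_1},B_{p_2}$ weights (which sum to one) with the bound $d_{p_1,M,p_2}^\textnormal{d}\ge p_2/M\ge \text{(something)}/M$ collapses the double sum in (\ref{achievedownload}) to at most $\frac{M}{\text{min replication}}$. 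On the converse side I would pick $t=1$ in (\ref{lowerdownload}) to get $\tau_l^\textnormal{d}\ge M-(M-1)(q-1)\frac{r}{K}\mu$, or more usefully pick the maximizing $t$ to show $\tau_l^\textnormal{d}\ge \frac{M}{1+(K-1)\cdot\text{(redundancy)}}$-type bound; the ratio then telescopes to $2(1+n\mu)$, the factor $2$ being the known loss from the ZF-then-IA cascade of \cite{FanXu} (the "$B_{l_{p_1}-1}$" residual term), and $1+n\mu$ accounting for the gap between the realized replication $r\mu$ and the ideal $K\mu$. Setting $n=0$ (so $r=K$) gives replication exactly $K\mu$ and the bound $2$. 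For the $\mu=1$ refinements: when $r\ge M+K-q$ every output sits at $p_2\ge M$ ENs, so $d^\textnormal{d}=1$ for every term and $\tau_\textnormal{a}^\textnormal{d}=1$, while (\ref{lowerdownload}) with $t=M$ gives $\tau_l^\textnormal{d}\ge 1$, so both equal $1$; for the intermediate range I would again bound $d_{p_1,M,p_2}^\textnormal{d}\ge p_2/M$ with $p_2\ge p_1-n\ge\ldots$ and pair it with $\tau_l^\textnormal{d}\ge1$ (the $t=1$ bound, or simply the trivial bound that one output vector per user must be delivered) to get the factor $n+1$. The main obstacle throughout the NDLT part is controlling the auxiliary quantities $l_{p_1}$, $l_\textnormal{min}$, $l_\textnormal{max}$ and the residual weight $B_{l_{p_1}-1}$ from (\ref{replicationmin}) well enough that the double sum is cleanly upper-bounded by twice the reciprocal of the minimum replication; I would handle this by showing the weights $B_{p_2}$ are supported on $p_2$ no smaller than $\max\{\rho_2-K+p_1,1\}$ and using convexity of $1/d^\textnormal{d}$ in the relevant range together with the fact that $\mathbb{E}[p_2]=p_1\rho_2/K$.
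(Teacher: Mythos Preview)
Your overall architecture is right, and the NULT optimality and the $\mu=1$ endpoint case are handled essentially as in the paper. But there are two substantive issues.

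\textbf{NCT gap: the choice of $t$ is wrong.} With $t=q-n_1$, the harmonic factor you must control is
\[
\frac{H_K-H_{K-q}}{H_K-H_{K-q+t-1}}=\frac{\sum_{j=K-q+1}^{K}1/j}{\sum_{j=K-n_1}^{K}1/j},
\]
whose denominator has only $n_1+1$ terms, each of size about $1/K$. Upper-bounding the numerator by $q/(K-q+1)$ and lower-bounding the denominator by $(n_1+1)/K$ gives a harmonic ratio of order $qn_2/(n_1+1)$, and after multiplying by your remaining factors the bound is of order $qn_2$, not $(1+n_1)(1+n_2)$; check $n_1=0$ to see the factor-$q$ loss explicitly. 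The paper instead takes $t=q/2$. Then both harmonic differences contain roughly $q/2$ terms, and writing the factor as $1+\frac{H_{K-q+t-1}-H_{K-q}}{H_K-H_{K-q+t-1}}$ bounds the added fraction by $\frac{(q/2)\cdot 1/(K-q+1)}{(q/2)\cdot 1/K}=K/(K-q+1)\le n_2$, yielding the $(1+n_2)$ factor; the companion factor $\frac{q/2}{(r-K+q/2)^+}\le\frac{q/2}{q/2-n_1}\le 1+n_1$ supplies the other piece. Your choice cannot be patched by sharper harmonic estimates; the balance point is $t\approx q/2$.

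\textbf{NDLT gap: your route is much harder than needed.} The paper does not track $\rho_2$, $l_{p_1}$, $B_{l_{p_1}-1}$, or use convexity in $p_2$. It simply uses monotonicity of $d^{\textnormal d}_{p_1,M,p_2}$ in both $p_1$ and $p_2$ (cf.\ \cite[Lemma~1]{gckkl}) to replace every DoF in (\ref{achievedownload}) by the worst one, $d^{\textnormal d}_{r-K+q,M,1}=\frac{r-K+q}{r-K+q+M-1}$; since all the $B$-weights sum to $1$, this collapses the double sum to $\tau_{\textnormal a}^{\textnormal d}\le 1/d^{\textnormal d}_{r-K+q,M,1}$. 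On the converse side it takes $t=\min\{q,M\}$ in (\ref{lowerdownload}) to get $\tau_l^{\textnormal d}\ge M/\min\{q,M\}$. The ratio is then $\frac{\min\{q,M\}(r-K+q+M-1)}{M(r-K+q)}$, and a short case split ($q\le M$ vs.\ $q>M$) together with $r\ge K-n$ and the feasibility constraint $q-n\ge 1/\mu$ delivers $2(1+n\mu)$. For $\mu=1$ the same crude bound gives $\tau_{\textnormal a}^{\textnormal d}\le M/\min\{r-K+q,M\}$, and the ratio $\min\{q,M\}/\min\{r-K+q,M\}$ yields the stated refinements. Your plan to control the replication distribution via $\mathbb{E}[p_2]$ and convexity might be made to work, but it is fighting exactly the auxiliary quantities you flagged as obstacles, none of which are actually needed.
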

\begin{proof}The proof of Lemma \ref{gapLemma} is given in Appendix.\end{proof}
The multiplicative gaps in Fig. \ref{latencytriples} are consistent with Lemma \ref{gapLemma}, since  $\tau_{\textnormal{a}}^{\textnormal{c}}/\tau_l^{\textnormal{c}}\!=\!2.74\!<\!22$ at $(r,q)\!=\!(9,10)$ (i.e., $n_1\!=\!1$ and $n_2\!=\!10$) and $\tau_{\textnormal{a}}^{\textnormal{d}}/\tau_l^{\textnormal{d}}\!=\!1.32\!<\!3.2$ at $(r,q)\!=\!(9,3)$ (i.e., $n\!=\!1$). Based on the inequality $\tau_{\textnormal{a}}/\tau_l\!=\!(\tau_{\textnormal{a}}^{\textnormal{u}}+\delta_c\tau_{\textnormal{a}}^{\textnormal{c}}+\delta_d\tau_{\textnormal{a}}^{\textnormal{d}})/(\tau_{\textnormal{a}}^{\textnormal{u}}+\delta_c\tau_l^{\textnormal{c}}+\delta_d\tau_l^{\textnormal{d}})\!\le\!\max\{\tau_{\textnormal{a}}^{\textnormal{c}}/\tau_l^{\textnormal{c}},\tau_{\textnormal{a}}^{\textnormal{d}}/\tau_l^{\textnormal{d}}\}$, the order-optimality of the end-to-end execution time is obtained as below.
\begin{corollary}
For a sufficiently large NULT $\tau^\textnormal{u}\!\ge\!\tau_{\textnormal{a}}^{\textnormal{u}}(K\!-\!n_1)$ and small recovery order $q\!\le\!K(1\!-\!1/n_2)\!+\!1$, with integers $0\!\le\!n_1\!\!<\!q/2$ and $n_2\!\ge\!1$, the multiplicative gap between the achievable end-to-end execution time $\tau_{\textnormal{a}}$ and its lower bound $\tau_l$ satisfies the inequality
\begin{equation}
\tau_{\textnormal{a}}/\tau_l\!\le\!\max\{(1+n_1) (1+n_2),2(1\!+\!n_1\mu)\}.
\end{equation}
\end{corollary}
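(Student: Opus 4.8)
The plan is to derive the final Corollary directly from Lemma \ref{gapLemma} together with the elementary algebraic fact that a ratio of sums of nonnegative terms is bounded by the largest ratio of corresponding terms. First I would recall that the end-to-end execution time is defined in \eqref{totaltime} as $\tau(r,q)\!=\!\tau_{\textnormal{a}}^{\textnormal{u}}(r)+\delta_c\tau^\textnormal{c}(r,q)+\delta_d\tau^\textnormal{d}(r,q)$ with $\delta_c,\delta_d\!>\!0$, and that the lower bound is $\tau_l(r,q)\!=\!\tau_{\textnormal{a}}^{\textnormal{u}}(r)+\delta_c\tau_l^{\textnormal{c}}(r,q)+\delta_d\tau_l^{\textnormal{d}}(r,q)$; the NULT term is identical in numerator and denominator because by Lemma \ref{gapLemma} the achievable NULT $\tau_{\textnormal{a}}^{\textnormal{u}}(r)$ cannot be reduced, so it is its own lower bound.

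The key step is the mediant-type inequality already quoted in the text preceding the corollary: for nonnegative reals, $\frac{a_1+a_2+a_3}{b_1+b_2+b_3}\!\le\!\max_i\{a_i/b_i\}$ whenever $b_i\!>\!0$. Applying this with $a_1\!=\!b_1\!=\!\tau_{\textnormal{a}}^{\textnormal{u}}(r)$, $a_2\!=\!\delta_c\tau_{\textnormal{a}}^{\textnormal{c}}$, $b_2\!=\!\delta_c\tau_l^{\textnormal{c}}$, $a_3\!=\!\delta_d\tau_{\textnormal{a}}^{\textnormal{d}}$, $b_3\!=\!\delta_d\tau_l^{\textnormal{d}}$ gives
\begin{equation}
\frac{\tau_{\textnormal{a}}}{\tau_l}\!\le\!\max\!\left\{1,\frac{\tau_{\textnormal{a}}^{\textnormal{c}}}{\tau_l^{\textnormal{c}}},\frac{\tau_{\textnormal{a}}^{\textnormal{d}}}{\tau_l^{\textnormal{d}}}\right\}\!=\!\max\!\left\{\frac{\tau_{\textnormal{a}}^{\textnormal{c}}}{\tau_l^{\textnormal{c}}},\frac{\tau_{\textnormal{a}}^{\textnormal{d}}}{\tau_l^{\textnormal{d}}}\right\},
\end{equation}
where the common factors $\delta_c,\delta_d$ cancel in each ratio, and the trailing equality holds because both component ratios are at least one. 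Then I would simply substitute the two bounds from Lemma \ref{gapLemma}: under the hypotheses $\tau^\textnormal{u}\!\ge\!\tau_{\textnormal{a}}^{\textnormal{u}}(K\!-\!n_1)$ and $q\!\le\!K(1\!-\!1/n_2)\!+\!1$ with $0\!\le\!n_1\!<\!q/2$ and $n_2\!\ge\!1$, inequality \eqref{gapnct} yields $\tau_{\textnormal{a}}^{\textnormal{c}}/\tau_l^{\textnormal{c}}\!\le\!(1+n_1)(1+n_2)$; and since $\tau_{\textnormal{a}}^{\textnormal{u}}(K\!-\!n_1)$ is monotone in $n_1$, the condition $\tau^\textnormal{u}\!\ge\!\tau_{\textnormal{a}}^{\textnormal{u}}(K\!-\!n_1)$ also guarantees the hypothesis of \eqref{gapndlt} with $n\!=\!n_1$, giving $\tau_{\textnormal{a}}^{\textnormal{d}}/\tau_l^{\textnormal{d}}\!\le\!2(1+n_1\mu)$. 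Combining the two bounds in the max delivers the claimed inequality $\tau_{\textnormal{a}}/\tau_l\!\le\!\max\{(1+n_1)(1+n_2),2(1+n_1\mu)\}$.

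The only genuinely delicate point — the rest being a one-line calculation — is to check that the NULT hypothesis in the corollary is simultaneously strong enough for \emph{both} of the NCT gap bound and the NDLT gap bound in Lemma \ref{gapLemma}, i.e. that "$\tau^\textnormal{u}\!\ge\!\tau_{\textnormal{a}}^{\textnormal{u}}(K\!-\!n_1)$" can play the role of both "$\tau^\textnormal{u}\!\ge\!\tau_{\textnormal{a}}^{\textnormal{u}}(K\!-\!n_1)$" (in \eqref{gapnct}) and "$\tau^\textnormal{u}\!\ge\!\tau_{\textnormal{a}}^{\textnormal{u}}(K\!-\!n)$" (in \eqref{gapndlt}) with the identification $n\!=\!n_1$. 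Since $\tau_{\textnormal{a}}^{\textnormal{u}}(r)\!=\!\frac{(M-1)r+K}{K}$ is increasing in $r$, a larger available upload budget only makes the hypotheses in \eqref{gapnct}--\eqref{gapndlt} easier to satisfy, so no incompatibility arises; I would state this monotonicity explicitly and note that the feasibility of the chosen $(r,q)$ is assured by $(r,q)\!\in\!\mathcal{R}$ as in Theorem \ref{achievableresults}. With that observation in place the proof is complete.
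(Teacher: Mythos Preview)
Your proposal is correct and follows essentially the same route as the paper: the paper derives the corollary in one line from the mediant-type inequality $\tau_{\textnormal{a}}/\tau_l\le\max\{\tau_{\textnormal{a}}^{\textnormal{c}}/\tau_l^{\textnormal{c}},\tau_{\textnormal{a}}^{\textnormal{d}}/\tau_l^{\textnormal{d}}\}$ stated just before the corollary, and then invokes the two bounds of Lemma~\ref{gapLemma}. Your write-up simply spells out the same argument with a bit more care (the NULT ratio equals $1$, the identification $n=n_1$ for \eqref{gapndlt}, and monotonicity of $\tau_{\textnormal{a}}^{\textnormal{u}}(r)$), all of which are fine.
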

\vspace{-4mm}
\subsection{Special Cases}
In the special case when $r\!=\!K$, hence ignoring limitations on the uplink transmission, the achievable NDLT (\ref{achievedownload}) reduces to $\tau_{\textnormal{a}}^{\textnormal{d}}(K,q)\!=\!M\sum^{l_\textnormal{max}}_{p_2=l_{q}}\!\!B_{p_2}/p_2\!+\!
\!B_{l_{p_1}\!-1}/(l_{p_1}\!-\!1)$ when using only ZF precoding in downlink, which is consistent with the normalized communication delay in \cite[Eq. (13)]{jingjing}. Furthermore, when setting $q\!=\!K$, hence ignoring stragglers{'} effects, and $\mu\!=\!1$, i.e., ignoring ENs{'} storage constraint, the achievable NDLT (\ref{achievedownload}) reduces to $\tau_{\textnormal{a}}^{\textnormal{d}}\!=\!M/\min\{K,M\}$, which is optimal and recovers the communication load in \cite[Remark 5]{8007057}, the NDT with cache-aided EN cooperation in \cite[Eq. (25)]{sengupta2017fog}, and the NDLT in \cite[Eq. (50)]{KKLMEC}.
\section{Achievable Scheme} \label{AchievableScheme}
In this section, we present the achievable scheme for any $\mu\!\in\!\!\{1/K,2/K,\cdots\!,1\}$\footnote{For general $u\!\in\![\frac{1}{K},1]$ satisfying $K\mu\!=\!\beta \lceil K\mu\rceil\!+\!(1\!-\!\beta)\lfloor K\mu\rfloor$, we can use memory- and time-sharing methods to achieve the linear combinations of the latency triplets achieved at integers $\lceil K\mu\rceil$ and $\lfloor K\mu\rfloor$.} and any repetition and recovery order pair $(r,q)$ in the feasible set $\mathcal{R}$ in (\ref{regionR}). Note that each input is computed by at least $r\!-\!(K\!-\!q)$ non-stragglers, so set $\mathcal{R}$ ensures that any subset of $r\!-\!K\!+\!q$ ENs can store at least $m$ rows of $\mathbf{A}$ to multiply each input. The entire scheme including task assignment, input uploading, edge computing and
output downloading are detailed below. 
\subsubsection{Task Assignment}\label{achieveuploadscheme} 
In this paper, we treat tasks from all users equally without considering user priority. So, without
loss of generality, we consider the task assignment of $\{\mathbf{u}_{i,j}\}^{N}_{j=1}$ for user $i\!\in\!\mathcal{M}$. As discussed in Section \ref{keyidea}, for a repetition order $r$, we partition the $N$ input vectors $\{\mathbf{u}_{i,j}\}^{N}_{j=1}$ of each user $i\!\in\!\mathcal{M}$ into $\binom{K}{r}$ equal-sized subsets, each denoted as $\mathcal{U}_{i,\mathcal{K}^{'}}$ and assigned to all the $r$ ENs in subset $\mathcal{K}^{'}\!\subseteq\!\mathcal{K}$ for computation. Each EN $k$ is thus assigned $M\binom{K\!-\!1}{r-\!1}N/\binom{K}{r}\!=\!MNr/K$ inputs corresponding to subsets $\{\mathcal{U}_{i,\mathcal{K}^{'}}\!: i\!\in\!\mathcal{M}, \mathcal{K}^{'}\!\subseteq\!\mathcal{K}, |\mathcal{K}^{'}|\!=\!r, k\!\in\!\mathcal{K}^{'}\}$. By Definition \ref{defenition1}, the \emph{repetition order} is calculated as $K(MNr/K)/MN\!=\!r$, which equals the cardinality $|\mathcal{K}^{'}|$.
\subsubsection{Input Uploading}\label{uploadtime}
Based on the task assignment $\{\mathcal{U}_{i,\mathcal{K}^{'}}\!\}$, each user $i\!\in\!\mathcal{M}$ uploads the subset  $\mathcal{U}_{i,\mathcal{K}^{'}}$ of inputs to the subset $\mathcal{K}^{'}$ of ENs via the uplink channel for $\mathcal{K}^{'}\!\subseteq\!\mathcal{K}$ and $|\mathcal{K}^{'}|\!=\!r$. In other words, each user communicates with all $\binom{K}{r}$ distinct subsets of ENs of cardinality $r$, and any subset of $r$ ENs can form a receiver multicast group. Hence, the resulting uplink channel can be treated as an X-multicast channel with $M$ transmitters, $K$ receivers, and size-$r$ multicast group, the same as that defined in \cite{DOfNiesen} (see Fig. \ref{uplinkdof} for the case with $M\!=\!K\!=\!3$ and $r\!=\!2$). Enabled by asymptotic interference alignment with infinite symbol extensions, each group of $M$ interfering signals from $M$ transmitters can be aligned along the same direction at each receiver\cite{DOfNiesen}. As a result, each receiver can successfully decode a total of $M\binom{K-1}{r-1}$ desired messages from $M$ transmitters over the symbol-extended channel, with the other $M\binom{K-1}{r}$ undesired messages being aligned into $\binom{K-1}{r}$ common subspaces, each subspace containing $M$ undesired messages from $M$ transmitters. For instance, in Fig. \ref{uplinkdof}, each receiver can decode the desired $3\binom{2}{1}\!=\!6$ messages occupying independent subspaces with the undesired 3 signals sent by 3 transmitters being aligned into a common subspace. A per-receiver DoF of $6/7$ can be achieved asymptotically.
\begin{figure}[t]
\centering
\includegraphics[width=2.65in, height=1.5in]{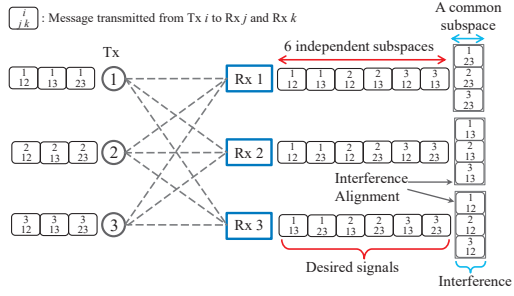}
\vspace{-2mm}
\caption{Interference alignment on the 3-Tx 3-Rx X-multicast channel with size-2 multicast group.}
\vspace{-8mm}
\label{uplinkdof}
\end{figure}
In general, as proved in \cite[Lemma 1]{FanXu} and \cite[Theorem 2]{DOfNiesen}, the optimal per-receiver DoF of this channel is given by $d_{r}^{\textnormal{u}}\!=\!Mr\!/(Mr\!+\!K\!\!-\!r)$.
The per-receiver rate of this channel in the high SNR regime can be approximated as $R^\text{u}_r\!=\!d_{r}^{\textnormal{u}}\!\times\!\log\!P^\textnormal{u}\!+\!o(\log\!P^\textnormal{u})$, where only the first term is relevant in computing (\ref{NULTtau}), so the uploading time can be approximately expressed as $T^\textnormal{u}\!=\!\frac{MNr}{K}nB/(d_{r}^{\textnormal{u}}\log\!P^\textnormal{u}\!+o(\log\!P^\textnormal{u}))$. Let $P^\textnormal{u}\!\to\!\infty$ and $B\!\to\!\infty$, by Definition \ref{defenition2}, the NULT $\tau_{\textnormal{a}}^{\textnormal{u}}$ at repetition order $r$ is given as below,
\begin{equation}\label{tau_u_a2}
\tau_{\textnormal{a}}^{\textnormal{u}}(r)\!=\! \frac{Mr/K}{d_{r}^{\textnormal{u}}}\!=\!\frac{(M\!-\!1)r \!+\! K}{K}.
\end{equation}
\subsubsection{Edge Computing}
After the input uploading phase is finished, each EN computes the products of the assigned input vectors and the stored coded matrix. Following Section \ref{keyidea}, a cascade of an MDS code with rate $1/\rho_1$ and a repetition code with rate $1/\rho_2$ is applied to encode matrix $\mathbf{A}$ into $\mathbf{A}_c$. Under the constraint of the total storage size $K\mu$, the code rates satisfy $\rho_1\rho_2\!=\!K\mu$, $\rho_1\!\in\!\left\{1,K\mu/(K\mu\!-\!1),K\mu/(K\mu\!-\!2),\cdots,K\mu\right\}$ and $\rho_2\!\in\![K\mu]$. Then, we split the coded matrix $\mathbf{A}_c$ into $\binom{K}{\rho_2}$ submatrices $\{\mathbf{A}_{c,\mathcal{K}^{''}}\!\}$, each stored at a distinct subset $\mathcal{K}^{''}\!$ of $\rho_2$ ENs. As shown in Fig. \ref{mds_repetition}, when there are $K\!-q$ stragglers randomly occurring, any subset of $r\!-\!K\!+q$ non-straggling ENs must store at least $m$ encoded rows to compute all outputs. This can be ensured by condition $\rho_1m\!-\!\binom{K\!-(r-\!K\!+q)}{\!\rho_2\!}\rho_1m/\binom{K}{\rho_2}\!\ge\!m$. Further, under this recovery condition, in order to create more data redundancy, the parameter $\rho_2\!\in\![K\mu]$ is maximized as
\begin{align} \label{theoremcoderate2}
\!\!\rho_2\!=\!\inf\!\bigg\{\rho\!:\!\!\binom{K}{\rho}\!-\!\binom{2K\!-\!r\!-\!q}{\rho}\!\ge\! \frac{1}{\rho_1}\binom{K}{\rho},\rho_1\rho\!=\!K\mu,
\rho_1\!\in\!\big\{1,\frac{K\mu}{K\mu\!-\!1},\frac{K\mu}{K\mu\!-\!2}\cdots,K\mu\big\}\!\bigg\};\!\!
\end{align}As an example, in Fig. \ref{downlinkchannel}, for $K\!\!=\!\!M\!\!=\!\!5$, $m\!=\!40$, $\mu\!\!=\!\!3/5$, $q\!\!=\!\!3$, and $r\!\!=\!\!4$, by (\ref{theoremcoderate2}), we have $(\rho_1,\rho_2)\!=\!(3/2,2)$ such that $\mathbf{A}$ is encoded into $60$ rows and then split into $\binom{5}{2}\!\!=\!\!10$ submatrices, each with $6$ rows replicated at $2$ ENs. By the given task input assignment $\{\mathcal{U}_{i,\mathcal{K}^{'}}\}$, each EN $k$ computes $MNr\mu m/K$ row-vector products.  Let $\omega_{1:K}\!\le\!\omega_{2:K}\!\le\!\cdots\!\le\!\omega_{K:K}$ denote the order statistics in a sample of size $K$ from an exponential distribution with mean $1/\eta$\cite{arnold2008first}, by Definition \ref{defenition2}, the NCT is given by
\begin{align}
\tau_{\textnormal{a}}^{\textnormal{c}}(r,q)&\!=\lim\limits_{m\to\infty}\!\frac{\mathbb{E}\left[\frac{MNr\mu m}{K}\omega_{q:K}\right]}{Nm/\eta} 
\!=\!\frac{Mr\mu(H_K\!-\!H_{K-q})}{K},
\end{align}which follows $\mathbb{E}\left[\omega_{q:K}\right]\!=\!(H_K\!-\!H_{K-q})/\eta$\cite[Eq. (4.6.6)]{arnold2008first}.
\begin{figure}[t]
\centering
\includegraphics[width=3.9in, height=2.3in]{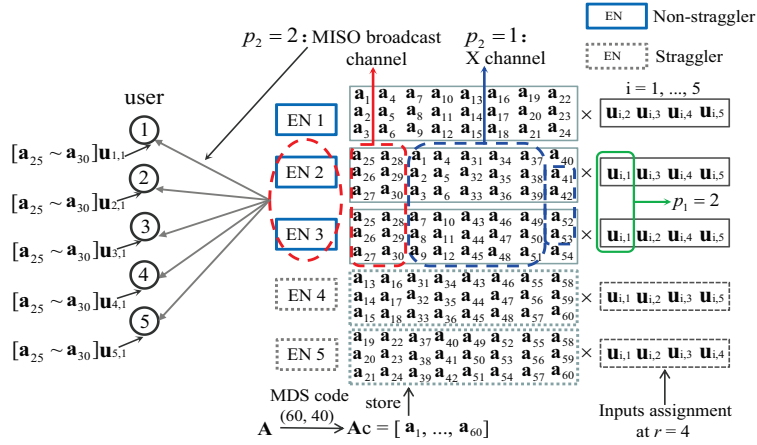}
\vspace{-2mm}
\caption{Illustration of downlink transmission for $K\!=\!M\!=\!5$, $\mu\!=\!3/5$, $m\!=\!40$, $N\!=\!5$, $q\!=\!3$, $r\!=\!4$, and $(\rho_1,\rho_2)\!=\!(3/2,2)$. The MISO broadcast channel and X-channel are formed to transmit outputs of $\{\mathbf{u}_{i,1}\}^5_{i=1}$ back to the users sequentially. This figure only shows the pattern of MISO broadcast channels for transmitting $\{\mathbf{a}_{25}\mathbf{u}_{i,1},\cdots,\mathbf{a}_{30}\mathbf{u}_{i,1}\}^5_{i=1}$. Outputs of $\{\mathbf{u}_{i,2}\}^5_{i=1},\{\mathbf{u}_{i,3}\}^5_{i=1},\cdots,\{\mathbf{u}_{i,5}\}^5_{i=1}$ are transmitted in a similar way.}
\vspace{-8mm}
\label{downlinkchannel}
\end{figure}
\subsubsection{Output Downloading}
At the end of edge computing phase, the $K\!-\!q$ non-straggling ENs return the computed outputs back to users via the downlink channel. Following Section \ref{keyidea}, for each user, the number of input vectors computed by $p_1$ non-straggling ENs equals $\binom{K\!-q}{r-p_1}N/\binom{K}{r}\!=\!B_{p_1}N/\binom{q}{p_1}$, where $r\!-\!(\!K\!-\!q)\!\le\!p_1\!\le\!\min\{r,q\}$. Furthermore, the number of encoded rows of $\mathbf{A}$ replicated at $p_2$ non-straggling ENs is $\binom{K-p_1}{\rho_2-p_2}\rho_1m/\binom{K}{\rho_2}\!=\!B_{p_2}m/\binom{p_1}{p_2}$, where $\max\{\rho_2\!-\! K\!+\!p_1,1\}\!\le\!p_2\!\le\!\min\{p_1,\rho_2\}$. Hence, among the $p_1$ non-straggling ENs, any subset of $p_2$ ENs computing the same $MB_{p_1}NB_{p_2}m/\big(\binom{q}{p_1}\binom{p_1}{p_2}\big) $ outputs can form a transmitter cooperation group, resulting in $\binom{p_1}{p_2}$ groups in total, and each EN cooperation group has outputs to send to all users. The resulting downlink is a cooperative X channel with $p_1$ transmitters, $M$ receivers, and size-$p_2$ cooperation group, as defined in \cite{FanXu} (see Fig. \ref{downlinkdof} for the case with $p_1\!=\!M\!=\!3$ and $p_2\!=\!2$).
As discussed in Sec. \ref{keyidea}, when $p_2\!\ge\!M$,
each subset of $p_2$ ENs can cooperatively transmit common outputs to $M$ users via ZF precoding. In contrast, when $p_2\!<\!M$, each subset of $p_2$ ENs partitions each common output into $\binom{M-1}{p_2-1}$ submessages, and first use ZF precoding to null the interference caused by each submessage at a distinct subset of $p_2\!-\!1$ undesired users. Then, by cascading ZF precoding with asymptotic IA, the rest of interferences from each subset of $t-\!1$ ENs can be aligned into a distinct subspace at each user\cite{FanXu}. Particularly, when $p_2\!=\!M\!-\!1$, each submessage only causes interference to one user, so all interfering signals at each user can be aligned into a common subspace. For example, in Fig. \ref{downlinkdof}, each common message is split into $2$ submessages with each being cancelled at a undesired receiver and causing interference only to another undesired receiver. Then, each receiver can decode the $2\binom{3}{2}\!=\!6$ desired submessages with the rest $2\binom{3}{2}\!=\!6$ interferences being aligned into a common subspace, which achieves a per-receiver DoF of $6/7$.
\begin{figure}[t]
\centering
\includegraphics[width=3.9in, height=2in]{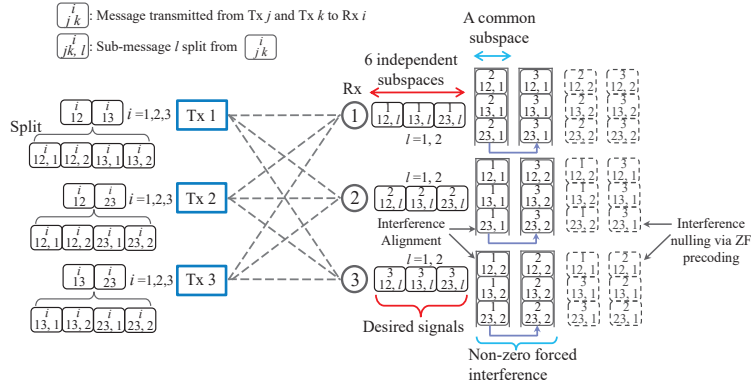}
\vspace{-2mm}
\caption{Interference alignment and ZF precoding on the 3-Tx 3-Rx cooperative X channel with size-2 cooperation group.}
\vspace{-8mm}
\label{downlinkdof}
\end{figure}

In general, by \cite[Lemma 1]{FanXu}, an achievable per-receiver DoF $d_{p_1,M,p_2}^{\textnormal{d}}$ of this downlink channel is given as (\ref{dofdd}), which is within a multiplicative gap of $2$ to the optimal DoF. The per-receiver channel rate for high SNR regime can be approximated as $d_{p_1,M,p_2}^{\textnormal{d}}\!\times\!\log\!P^\textnormal{d}\!+\!o(\log\!P^\textnormal{d})$, where only the first term is relevant in computing (\ref{NDLTtau}). The traffic load for each user to download its desired outputs is $B_{p_1}NB_{p_2}mB/\binom{q}{p_1}$ bits, so the downloading time can be approximately given by   $T^\textnormal{d}\!=\!\frac{B_{p_1}NB_{p_2}mB/\binom{q}{p_1}}{d_{p_1,M,p_2}^{\textnormal{d}}\log\!P^\textnormal{d}+o(\log\!P^\textnormal{d})}$. Let $P^\textnormal{d}\!\to\!\infty$ and $mB\!\to\!\infty$, by Definition \ref{defenition2}, the NDLT for each user to download the outputs replicated at $p_2$ non-stragglers is given by
\begin{equation}
\tau^\textnormal{d}_{p_1,p_2}\!=\!\frac{B_{p_1}B_{p_2}/\binom{q}{p_1}}{d_{p_1,M,p_2}^{\textnormal{d}}}. \label{p_1,p_2}
\end{equation}

Due to the MDS coding, the total number of coded outputs available on the $p_1$ ENs may exceed the number $m$ needed to recover the outputs of each input vector. Denote by $l_{p_1}\!-\!1$ the minimum degrees of replication of needed coded outputs on the $p_1$ ENs, $l_{p_1}$ is determined by \begin{small}$l_{p_1} \!=\! \inf\!\big\{l\!:\!\sum\nolimits^{\min\{p_1,\rho_2\}}_{p_2=l}\!\!B_{p_2}m\!\le\!m, l\!\ge\!\max\{\rho_2\!-\! K\!+\!p_1,1\} \big\}$\end{small}, so the number of needed coded outputs replicated at $l_{p_1}\!\!-\!1$ ENs equals $MB_{p_1}NB_{l_{p_1}\!-1}m/\binom{q}{p_1}$, where $B_{l_{p_1}-1}\!=\!1\!-\!\!\sum^{l_\textnormal{max}}_{p_2=l_{p_1}}\!\!B_{p_2}$. Note that   $B_{l_{p_1}\!-\!1}m/\binom{p_1}{l_{p_1}\!-1}$ can be seen as an integer for infinitely large $m$ since $\big(\!B_{l_{p_1}\!-1}m\!\!\!\mod{\!\binom{p_1}{l_{p_1}\!-1}}\big)\!/m\!<\!\binom{p_1}{l_{p_1}\!-1}/m\!\to\! 0$ as $m\!\to\!\infty$. So it enables any subset of $l_{p_1}\!-\!1$ ENs among $p_1$ ENs to cooperatively transmit $B_{p_1}NB_{l_{p_1}\!-1}m/\big(\binom{q}{p_1}\binom{p_1}{l_{p_1}\!-1}\big)$ common outputs to each user, the downlink channel is also a cooperative X channel with $p_1$ transmitters, $M$ receivers, and size-$(l_{p_1}\!-\!1)$ cooperation group. Similar to (\ref{p_1,p_2}), the NDLT for each user to download the outputs replicated at $l_{p_1}\!\!-\!1$ non-stragglers is given by
\begin{equation}
\tau_{p_1,l_{p_1}-1}^{\textnormal{d}}\!=\!\frac{B_{p_1}B_{l_{p_1}-1}/\binom{q}{p_1}}{d_{p_1,M,l_{p_1}-1}^{\textnormal{d}}}.\label{download2}
\end{equation}
Furthermore, by considering all the inputs computed by $p_1$ ENs,  with $p_1$ from $r\!-\!(K\!-\!q)$ to $\min\{r,q\}$, and all the outputs replicated at $p_2$ ENs, with $p_2$ from $l_{p_1-1}$ to $\min\{p_1,\rho_2\}$, and by summing all download time given in (\ref{p_1,p_2}) and (\ref{download2}), the NDLT $\tau_{\textnormal{a}}^{\textnormal{d}}(r,q)$ is obtained as below,
\begin{equation}
\tau_{\textnormal{a}}^{\textnormal{d}}(r,q)\!=\!\sum\limits^{\min\{r,q\}}_{p_1\!=r-\!K\!+q}\!\!\!\!\!B_{p_1}\!\!\left(\sum\limits^{l_\textnormal{max}}_{p_2=l_{p_1}}\!\!\frac{B_{p_2}}{d_{p_1,M,p_2}^{\textnormal{d}}}\!+\!
\!\frac{B_{l_{p_1}\!-1}}{d_{p_1,M,l_{p_1}\!-1}^{\textnormal{d}}}\!\right)
\end{equation}

We now illustrate the output downloading latency by the example in Fig. \ref{downlinkchannel}. First, for inputs $\{\mathbf{u}_{i,1}\}^5_{i=1}$ computed by $p_1\!\!=\!\!2$ ENs, there are 
$30$ outputs $\{\mathbf{a}_{25}\mathbf{u}_{i,1},\ldots,\mathbf{a}_{30}\mathbf{u}_{i,1}\}^5_{i=1}$ replicated at  $p_2\!=\!2$ ENs. These $30$ outputs can be cooperatively transmitted back to the users via ZF precoding, resulting in a 2-transmitter 5-receiver MISO broadcast channel that is a special case of cooperative X channels under full transmitter cooperation. As a result, an NDLT of $3/40$ is achieved. After this round of transmission, users still need $34\!\times\!5\!=\!170$ outputs inside the blue dashed rectangle in Fig. \ref{downlinkchannel}, which can be transmitted by the 2 ENs via interference alignment. The downlink is a 2-transmitter 5-receiver X channel that is a special case of cooperative X channels with size-$1$ cooperation group, yielding the NDLT of  $51/100$. Thus, the NDLT for outputs of $\{\mathbf{u}_{i,1}\}^5_{i=1}$ is $3/40\!+\!51/100\!=\!117/200$. Then, the input vectors $\{\mathbf{u}_{i,2}\}^5_{i=1},\{\mathbf{u}_{i,3}\}^5_{i=1}$ are also computed by $p_1\!\!=\!\!2$ ENs, their outputs can be transmitted in a similar way, which achieves an NDLT of $(117/200)\!\times\!2\!=\!117/100$. Likewise, for the inputs $\{\mathbf{u}_{i,4}\}^5_{i=1},\{\mathbf{u}_{i,5}\}^5_{i=1}$ computed by $p_1\!\!=\!\!3$ ENs, the 3-transmitter 5-receiver cooperative X-channel with size-$2$ cooperation group, and 3-transmitter 5-receiver X-channel are formed to transmit the total $400$ outputs, yielding an NDLT of $(21/100\!+\!77/300)\!\times\! 2\!=\!14/15$. Thus, in this example, the total NDLT at $(r,q)\!=\!(4,3)$ is $14/15\!+\!(117/200)\!\times\!3\!=\!1613/600$.
\subsubsection{Inner Bound of Compute-Download Latency Region}
For an NULT $\tau^\textnormal{u}\!=\!\tau_{\textnormal{a}}^{\textnormal{u}}(r)$ given in (\ref{tau_u_a2}) for some $r\!\in\![K]$, by the region $\mathcal{R}$ given in (\ref{regionR}), the feasible recovery order $q$ satisfies $\lceil 1/\mu\rceil\!+\!K\!-r\!\le\!q\!\le\!K$. By Remark \ref{remarkconvex}, for any two integer-valued $q_1$ and $q_2$ in $\big[\lceil 1/\mu\rceil\!+\!K\!-\!r\!:\! K\big]$, any convex combination of achievable pairs $(\tau_{\textnormal{a}}^{\textnormal{c}}(r,q_1),\tau_{\textnormal{a}}^{\textnormal{d}}(r,q_1)$ and $(\tau_{\textnormal{a}}^{\textnormal{c}}(r,q_2),\tau_{\textnormal{a}}^{\textnormal{d}}(r,q_2))$ can also be achieved. So an inner bound $\mathscr{T}_{in}(\tau^\textnormal{u})$ of the compute-download latency region is given as the convex hull of set $\big\{\!\left(\tau_{\textnormal{a}}^{\textnormal{c}}(r,q),\tau_{\textnormal{a}}^{\textnormal{d}}(r,q)\right)\!:\!q\!\in\!\big[\lceil\frac{1}{\mu}\rceil\!+\!K\!-\!r\!:\! K\big]\big\}$.

\section{Numerical Examples and Discussion} \label{simulation}
In this section, we first evaluate the system performance in terms of the asymptotic end-to-end execution time $\tau$ in Eq. (\ref{totaltime}). Then, we use numerical examples to show the average uploading, computing, downloading, and end-to-end execution times in the non-asymptotic regime.
\vspace{-3mm}
\subsection{Asymptotic Results}
The analysis in the previous section has shown that, by choosing the repetition order $r$ and the recovery order $q$, one can obtain different triplets of the upload latency $\tau^\textnormal{u}$, computation latency $\tau^\textnormal{d}$, and download latency $\tau^\textnormal{c}$. As a result, parameters $(r,q)$ can be optimized to minimize the end-to-end execution time, yielding the minimum end-to-end execution time
$\tau^*\!=\! \min\limits_{(r,q)\in\mathcal{R}} \tau(r,q) $.
\begin{figure*}
\centering
\subfigure[$\tau^*$ versus $\delta_c$. $\delta_d\!=\!8$.]{
\label{computingSpeed}
  \includegraphics[width=2.8in, height=2.1in]{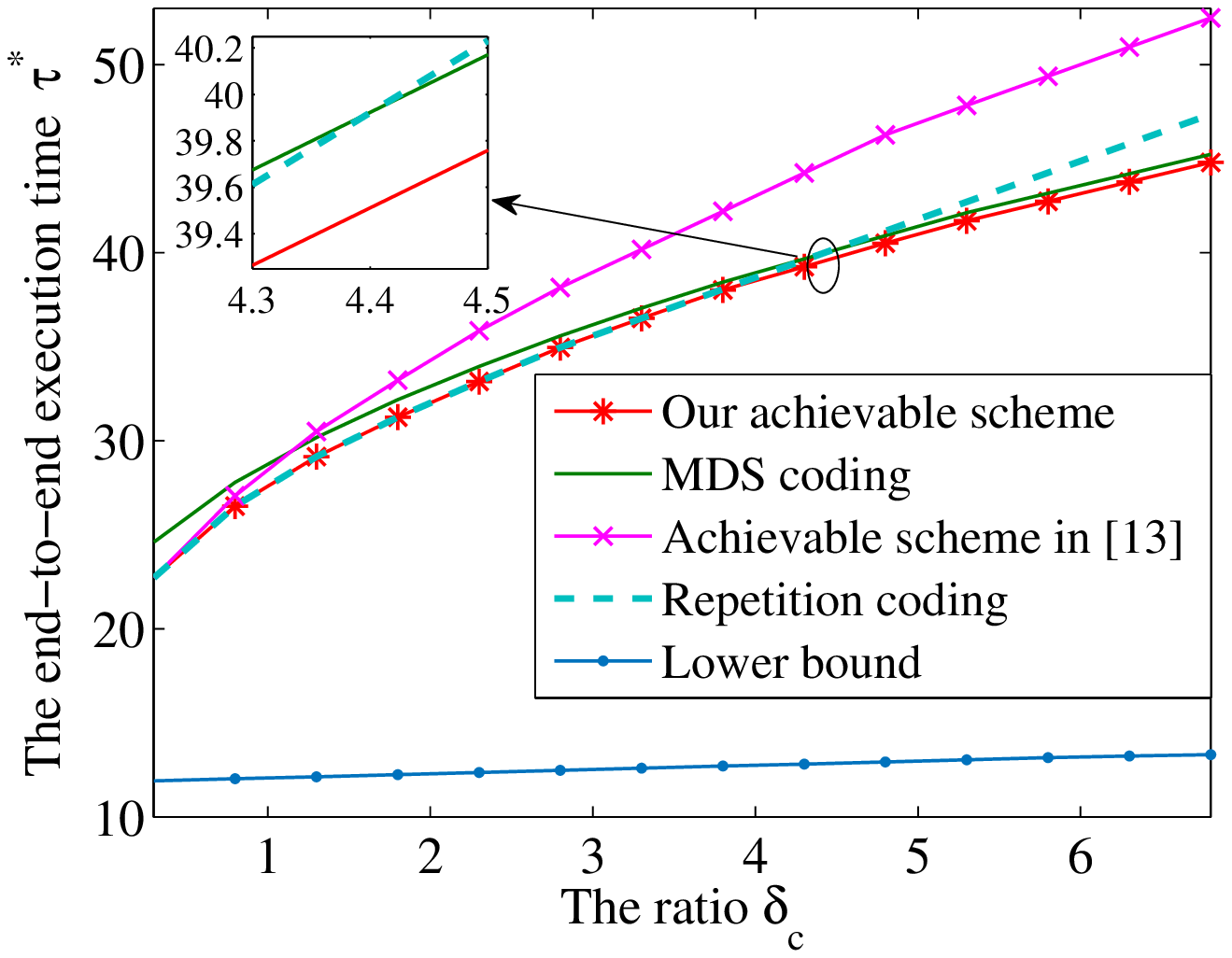}}
  \vspace{-2mm}\hspace{1mm}
 \subfigure[$\tau^*$ versus $\delta_d$. $\delta_c\!=\!5$.]{
\label{outputsize}
  \includegraphics[width=2.8in, height=2.1in]{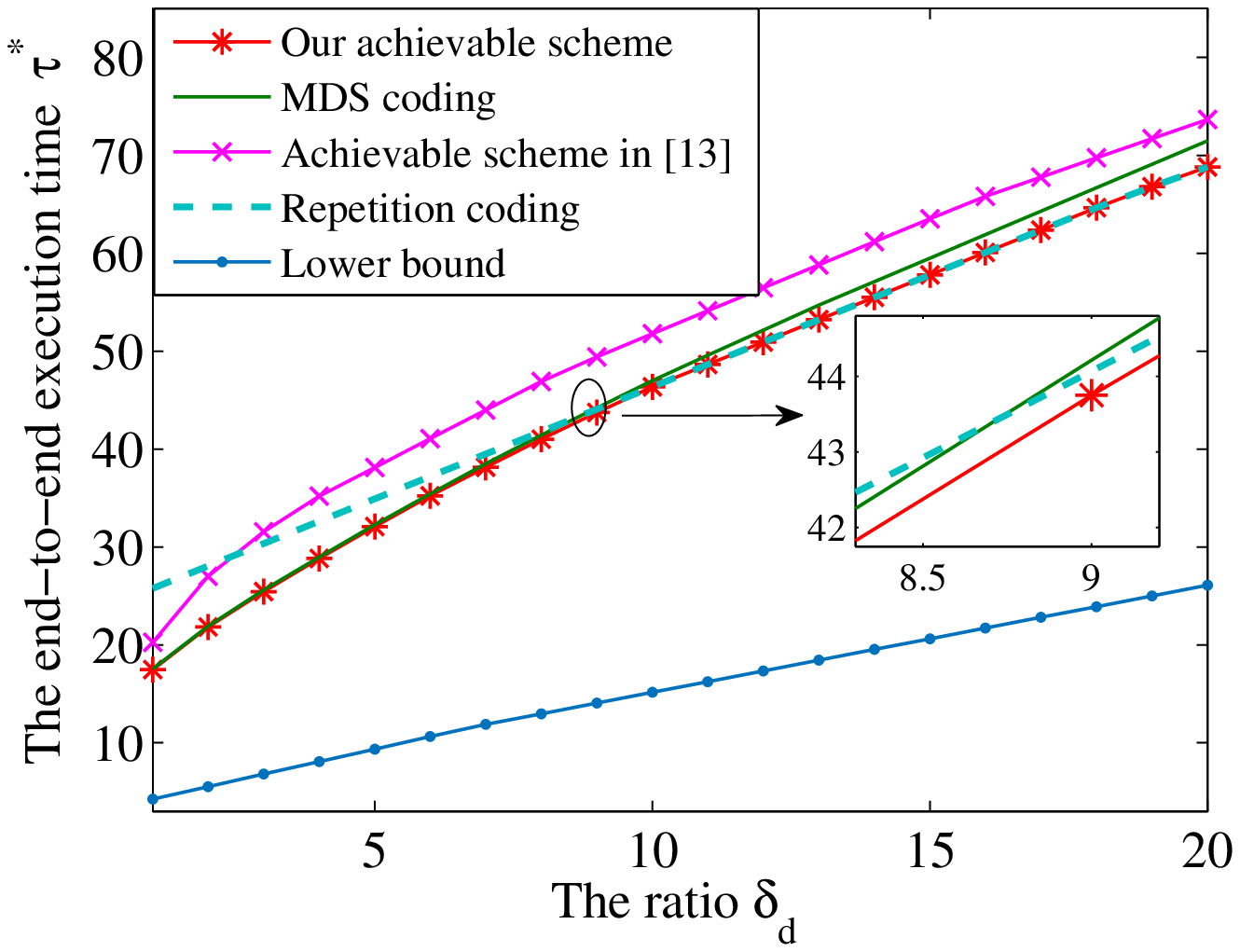}}
   \vspace{-1mm}
\caption{The impacts of the ratios $\delta_c$ and $\delta_d$ on the end-to-end execution time $\tau^*$.} \label{netparameters}
\vspace{-8mm}
\end{figure*}
To illustrate the minimum end-to-end execution time $\tau^*$, we consider a MEC network with $M\!=\!8$ users and $K\!=\!10$ servers. Each EN has a fractional storage size of $\mu\!=\!3/5$. We compare the proposed scheme with the following baseline strategies, for which parameters $r$ and $q$ are also optimized:
a) MDS coding: Only an MDS code is applied to encode $\mathbf{A}$, i.e., we have the special case $(1/\rho_1,1/\rho_2)\!=\!(1/K\mu,1)$;
b) Repetition coding: Only the repetition code is applied to encode $\mathbf{A}$, i.e., we have  $(1/\rho_1,1/\rho_2)$ $=(1,1/K\mu)$;
c) Achievable schemes in \cite{jingjing}: Each EN has the inputs of all users and transmits the outputs back by using one-shot linear precoding. A cascade of MDS and repetition codes is applied to encode $\mathbf{A}$ as in the proposed scheme.

Fig. \ref{netparameters} shows the impacts of the ratio $\delta_c$ between computation and upload reference latencies and the ratio $\delta_d$ between download and upload reference latencies on the end-to-end execution time $\tau^*$. By providing more flexible choices for task uploading, the proposed scheme increases the achievable region in terms of the latencies for task uploading, computing, and output downloading, leading to a reduction in the end-to-end execution time. In Fig. \ref{netparameters}-(a), it is also observed that, when the computing speed is slow, i.e., $\delta_c$ is large, it is suboptimal to upload the input data of each user to all ENs, and, as a result, the proposed scheme obtains gains with respect to \cite{jingjing} that increases with $\delta_c$. In contrast, when $\delta_c$ is small, the proposed scheme reduces to that of \cite{jingjing}, since, for fast server computing speeds, the optimal task assignment policy prefers to replicate tasks at all ENs.

Fig. \ref{netparameters} also demonstrates the advantages of using both MDS and repetition coding. In Fig. \ref{netparameters}-(b), when $\delta_d$ is sufficiently small, i.e., $\delta_d\!\le\! 8.5$, the proposed cascaded MDS-repetition coding scheme approaches MDS coding. This is because in this regime, repetition coding only brings limited transmission cooperation gain in the downlink, but it requires large upload and computation latencies for output recovery. Therefore, when uplink transmission and edge computing are the major bottlenecks of the offloading process, MDS coding can reduce the input uploading constraint required for output recovery and  mitigate random straggling effects at computing phases. In contrast, when $\delta_d$ is larger, downlink latency becomes the bottleneck, and repetition coding is preferable since it can fully exploit transmission cooperation gains in data downloading to reduce the download time.
\vspace{-4mm}
\subsection{Non-asymptotic Results}
We consider a 2-user 3-server MEC network, and set the network parameters $N\!=\!3$, $\mu\!=\!2/3$, $n\!=\!100$, $m\!=\!900$ rows, $B\!=\!8$ bits, $1/\eta\!=\!10^{-4}$ seconds. The uplink and downlink channel bandwidths are $B_u\!=\!B_d\!=\!$ 100 KHz. We consider the normalized Rayleigh channel fading and normalized noise power. By set $\mathcal{R}$ in (\ref{regionR}), we have feasible pairs $(r,q)\!\in\!\left\{(2,3), (3,3), (3,2)\right\}$. Interference alignment on $2\!\times\!3$ or $3\!\times\!2$ networks only needs finite symbol extensions over these channels.  The simulations are averaged over 50000 independent channel realizations.

Table \ref{tablee} shows the actual average input uploading time $T^\text{u}$, edge computing time $T^\text{c}$, output downloading time $T^\text{d}$, and end-to-end execution time $T$, at different transmission powers. It is seen that the optimal policy is to upload user inputs to all 3 ENs and then wait for the fastest 2 ENs to finish their tasks, which significantly reduces the total time $T$. For example, when $P^\text{u}\!=\!P^\text{d}\!=\!20$ dB, the total time at $(r,q)\!=\!(2,3)$ is decreased by $23\%$ compared to the optimal time at $(r,q)\!=\!(3,2)$. In this policy, the higher repetition order enables the transmission cooperation to reduce downloading times, and the lower recovery order mitigates the effect of 1 straggling node and thus reduces computing times. Comparing the times at $(r,q)\!=\!(2,3)$ and $(r,q)\!=\!(3,3)$, we also see that replicating inputs at all 3 ENs may cause more total times since the gain on reducing downloading times is limited compared to the increased computing time.
\begin{small}\begin{table}\small
{\caption{Actual average uploading, computing, and downloading times}  \label{tablee}
\begin{center}
\vspace{-7mm}
\begin{tabular}{c|ccc|ccc|ccc}
\hline
\multicolumn{1}{c|}{$P^\text{u}$ (or $P^\text{d}$) }  &\multicolumn{3}{c|}{$10$ dB} &\multicolumn{3}{c|}{$20$ dB} & \multicolumn{3}{c}{$30$ dB} \\
\cline{1-10}
\multicolumn{1}{c|}{ $(r,q)$}
&  $(2,3)$     & $(3,3)$       & $(3,2)$  &  $(2,3)$     & $(3,3)$       & $(3,2)$ &  $(2,3)$     & $(3,3)$       & $(3,2)$     \\
\hline
$T^\text{u}$ (sec.)           &0.112                           &0.158                     & 0.158          &0.027                           &0.036                     & 0.036         &0.011                           &0.014                     & 0.014     \\
$T^\text{c}$ (sec.)            &0.44                          &0.66                    & 0.3                   & 0.44           & 0.66           & 0.3          & 0.44           & 0.66           & 0.3          \\
$T^\text{d}$ (sec.)           &0.315                          & 0.106                    & 0.313                   &0.105           & 0.048            & 0.105          &0.053           & 0.029           &0.053          \\
Total time $T$ (sec.)             &0.867                          &0.924                     &0.771                  & 0.572          &0.744         &0.441        &0.504           & 0.703         &0.367            \\\hline
\end{tabular}
\end{center}}
\vspace{-12mm}
\end{table}\end{small}

\section{Conclusions}\label{conclusion}
This paper studies the communication-computation tradeoff for distributed matrix multiplication in multi-user multi-server MEC networks with straggling ENs. We propose a new task offloading policy that leverages cascaded coded computing and cooperative transmission to alleviate the impact of straggling ENs and speed up the communication phase. We derive achievable upload-compute-download latency triplets and the end-to-end execution time, as well as the lower bounds. We prove that the obtained upload latency is optimal for fixed computation and download latencies, and that the computation latency and download latency are within constant multiplicative gaps to their respective lower bounds for a sufficiently large upload latency. Our results reveal that for a fixed upload latency, the download latency can be traded for computation latency; and that increasing the upload latency can reduce both the computation latency and download latency.
Through numerical results, we show that the proposed policy is able to obtain a more flexible trade-off among upload, computation, and download latencies than baseline schemes, and that this leads to a significant reduction in the end-to-end execution time.

\section*{Appendix: Proofs of Converse} \label{converseproof} 

In this appendix, we prove Theorem \ref{lower_bound} and Lemma \ref{gapLemma}. Note that in each subsection, we first derive the lower bound in Theorem \ref{lower_bound}, and then prove the multiplicative gap in Lemma \ref{gapLemma}. 

For a repetition-recovery order pair $(r,q)$, as discussed, each input is computed by at least  $r\!-\!(K\!-\!q)$ non-stragglers. The condition $(r-\!K\!+q)\mu\!\ge\!1$ must be satisfied such that any subset of $r\!-\!K\!+\!q$ non-stragglers are able to provide sufficient information to compute the outputs of all users. This proves that no pair $(r,q)$ is feasible outside the feasible set $\mathcal{R}$ in (\ref{regionR}).
Then, we consider an arbitrary user input assignment policy $\big\{\mathcal{U}_{i,\mathcal{K}^{'}}\!: i\!\in\!\mathcal{M},\mathcal{K}^{'}\!\subseteq\!\mathcal{K}, |\mathcal{K}^{'}|\!=\!r\big\}$ with $(r,q)\!\in\!\mathcal{R}$. The input vectors from user $i$ assigned to EN $k$ are denoted as set $\mathcal{I}_{i,k}\!\triangleq\!\big\{\mathcal{U}_{i, \mathcal{K}^{'}}\!\big\}_{\mathcal{K}^{'}\subset\mathcal{K}:k\in\mathcal{K}^{'}}$ for $i\!\in\!\mathcal{M}$ and $k\!\in\!\mathcal{K}$. The size of $\mathcal{I}_{i,k}$ is denoted as $\gamma_{i,k}NnB$ bits, where the ratio $\gamma_{i,k}$ satisfies
\begin{align}
&\sum\limits_{k\in\mathcal{K}}\gamma_{i,k} = r ,~ i \in\mathcal{M} \label{cons111}\\
&~0\le \gamma_{i,k} \le 1,~i \in\mathcal{M},~\text{and}~k\in\mathcal{K}. \label{cons222}
\end{align}In the following, we first derive the lower bounds on the NULT, NCT, and NDLT for a particular task assignment policy $\big\{\mathcal{U}_{i,\mathcal{K}^{'}}\!\big\}$ with repetition-recovery order pair $(r,q)\!\in\!\mathcal{R}$. Then, by considering all possible task assignment policies and the effect of random stragglers, we obtain the minimum lower bounds for the NULT $\tau_l^{\textnormal{u}}$, NCT $\tau_l^{\textnormal{c}}$, and NDLT $\tau_l^{\textnormal{d}}$. For a fixed NULT at $r\!\in\!\big[\lceil\frac{1}{\mu}\rceil,K\big]$, by convexity of the compute-download latency region, an outer bound of this region is given by the convex hull of all pairs $\{(\tau_l^{\textnormal{c}},\tau_l^{\textnormal{d}})\}$, as described in Theorem \ref{lower_bound}.
\subsection{Lower Bound and Optimality of NULT}
\subsubsection{Lower bound}For a particular task assignment policy $\big\{\mathcal{U}_{i,\mathcal{K}^{'}}\!\big\}$, we use genie-aided arguments to derive a lower bound on the NULT. Specifically, for any EN $k$ and user $i_o$, consider the following three disjoint subsets of task input vectors (or messages):
\begin{align}
\mathcal{W}_{r} &= \{\mathcal{U}_{i, \mathcal{K}^{'} }: i\in\mathcal{M}, k\in\mathcal{K}^{'}\},\\
\mathcal{W}_{t} &= \{\mathcal{U}_{i, \mathcal{K}^{'} }:i=i_o,k\notin \mathcal{K}^{'} \},\\
\overline{\mathcal{W}} &= \{\mathcal{U}_{i,\mathcal{K}^{'}}: i\ne i_o~\text{and}~k\notin \mathcal{K}^{'}\}.
\end{align}
The set $\mathcal{W}_{r}$ indicates the input messages from all users assigned to EN $k$ or all input messages that EN $k$ needs to decode, which satisfies $|\mathcal{W}_{r}|\!=\!\sum_{i\in\mathcal{M}}\gamma_{i,k}NnB$. The set $\mathcal{W}_{t}$ indicates the input messages from user $i_o$ assigned to all ENs in $\mathcal{K}$ excluding EN $k$, which satisfies $|\mathcal{W}_{t}|\!=\!(1\!-\!\gamma_{i_o,k})NnB$. The last set $\overline{\mathcal{W}}$ indicates all input messages from users in $\mathcal{M}$ excluding user $i$ assigned to ENs in $\mathcal{K}$ excluding EN $k$.

Let a genie provide the messages $\overline{\mathcal{W}}$ to all ENs, and additionally provide messages $\mathcal{W}_{r}$ to ENs in $\mathcal{M}/\{k\}$.
The received signal of EN $j$ can be represented as
\begin{align}
\mathbf{y}_j &= \sum\limits^{M}_{i=1,i\ne i_o}\mathbf{H}_{ji}^{\textnormal{u}}\mathbf{X}_{i} + \mathbf{H}_{ji_o}^{\textnormal{u}}\mathbf{X}_{i_o}+\mathbf{Z}_j^{\textnormal{u}},
\end{align}
where the diagonal matrices $\mathbf{H}_{ji}^{\textnormal{u}}$, $\mathbf{X}_{i}$, and $\mathbf{Z}_j^{\textnormal{u}}$ denotes the channel coefficients from user $i$ to EN $j$, signal transmitted by user $i$, and noise received at EN $j$, respectively, over the block length $T^\textnormal{u}$.
The ENs in $\mathcal{M}/\{k\}$ have messages $\overline{\mathcal{W}}\!\bigcup\!\mathcal{W}_{r}$, which include the input messages that EN $k$ should decode and input messages transmitted by all users excluding user $i_o$. By this genie-aided information, each EN $j\!\in\!\mathcal{M}/\{k\}$ can construct the transmitted symbols $\{\mathbf{X}_i\!:i\!\ne\!i_o\}$ and subtract them from the received signal. So we can rewrite the signal received at EN $j\!\ne\!k$ as
\begin{equation} \label{equation1122}
\bar{\mathbf{y}}_j = \mathbf{y}_j - \sum\limits_{i\in\mathcal{M}/\{i_o\}}\mathbf{H}_{ji}^{\textnormal{u}}\mathbf{X}_{i} = \mathbf{H}_{ji_o}^{\textnormal{u}}\mathbf{X}_{i_o} + \mathbf{Z}_j^{\textnormal{u}}.
\end{equation}
Each EN $j\!\in\!\mathcal{M}/\{k\}$ needs to decode the input messages in subset $\mathcal{W}_{t}$ assigned to it, denoted as $\mathcal{W}^j_{t}$. By Fano{'}s inequality and (\ref{equation1122}), we have
\begin{equation} \label{fano1}
H(\mathcal{W}^j_{t}|\mathbf{y}_j,\overline{\mathcal{W}},\mathcal{W}_{r}) \le T^\textnormal{u} \epsilon, ~~j\in\mathcal{M}/\{i\}.
\end{equation}
Since EN $k$ can decode input messages $\mathcal{W}_{r}$ assigned to it, by Fano{'}s inequality, we also obtain
\begin{equation}\label{Fano11}
H(\mathcal{W}_{r}|\hat{\mathbf{y}}_k,\overline{\mathcal{W}}) \le  T^\textnormal{u}\epsilon .
\end{equation}
Then, EN $k$ can construct the transmitted symbols $\{\mathbf{X}_i\!:i\!\ne\!i_o\}$ based on genie-aided messages $\overline{\mathcal{W}}$ and its decoded messages $\mathcal{W}_{r}$, and subtract them from its received signal, obtaining
\begin{equation}
\bar{\mathbf{y}}_k = \mathbf{y}_k - \sum\limits_{i\in\mathcal{M}/\{i_o\}}\mathbf{H}_{ki}^{\textnormal{u}}\mathbf{X}_{i} = \mathbf{H}_{ki_o}^{\textnormal{u}}\mathbf{X}_{i_o} + \mathbf{Z}_k^{\textnormal{u}}.
\end{equation}
Reducing the noise in the constructed signal $\bar{\mathbf{y}}_k$ and multiplying it by $\mathbf{H}_{ji_o}^{\textnormal{u}}\left(\mathbf{H}_{ki_o}^{\textnormal{u}}\right)^{-1}$, we obtain
\begin{equation}
\bar{\mathbf{y}}^{j}_k  = \mathbf{H}_{ji_o}^{\textnormal{u}}\left(\mathbf{H}_{ki_o}^{\textnormal{u}}\right)^{-1} \bar{\mathbf{y}}_k =  \mathbf{H}_{ji_o}^{\textnormal{u}}\mathbf{X}_{i_o} + \hat{\mathbf{Z}}_j^{\textnormal{u}},
\end{equation}
where $\hat{\mathbf{Z}}_j^{\textnormal{u}}$ is the reduced noise. By (\ref{equation1122}), we see that $\bar{\mathbf{y}}^{j}_k $ is a degraded version of $\bar{\mathbf{y}}_j $ for EN $j\!\in\!\mathcal{M}/\{i\}$. Hence, for the messages that ENs in $\mathcal{M}/\{i\}$ can decode, EN $k$ must also be able to decode them, and we have
\begin{equation}\label{fano2}
\!\!\!H(\mathcal{W}^j_{t}|\hat{\mathbf{y}}_k,\!\overline{\mathcal{W}},\!\mathcal{W}_{r})\!\le\! H(\mathcal{W}^j_{t}|\mathbf{y}_j,\!\overline{\mathcal{W}},\!\mathcal{W}_{r})\! \le\! T^\textnormal{u} \epsilon, j\!\!\in\!\!\mathcal{M}\!/\!\{i\}\!.\!\!
\end{equation}
Using genie-aided information, receiver cooperation, and noise reducing as discussed above can only improve channel capacity. Thus, we obtain the following chain of inequalities,
\begin{align}
|\mathcal{W}_{r}|+|\mathcal{W}_{t}| &= H(\mathcal{W}_{r},\mathcal{W}_{t}) \nonumber\\
&\stackrel{(a)}{=}\!H(\mathcal{W}_{r},\mathcal{W}_{t}|\overline{\mathcal{W}})\nonumber\\
&\stackrel{(b)}{=}\!I(\mathcal{W}_{r},\mathcal{W}_{t};\hat{\mathbf{y}}_k|\overline{\mathcal{W}}) + H(\mathcal{W}_{r},\mathcal{W}_{t}|\hat{\mathbf{y}}_k,\overline{\mathcal{W}})\nonumber\\
&\stackrel{(c)}{=}\!I(\mathcal{W}_{r},\mathcal{W}_{t};\hat{\mathbf{y}}_k|\overline{\mathcal{W}}) + H({\mathcal{W}_{r}|\hat{\mathbf{y}}_k,\overline{\mathcal{W}}})+H({\mathcal{W}_{t}|\hat{\mathbf{y}}_k,\mathcal{W}_{r},\overline{\mathcal{W}}})\nonumber\\
& \le I(\mathcal{W}_{r},\mathcal{W}_{t};\hat{\mathbf{y}}_k|\overline{\mathcal{W}})+ H({\mathcal{W}_{r}|\hat{\mathbf{y}}_k,\overline{\mathcal{W}}})\!+\! \sum\nolimits_{j\in\mathcal{M}/\{k\}}\!\!\!\!\!H({\mathcal{W}^j_{t}|\hat{\mathbf{y}}_k,\mathcal{W}_{r},\overline{\mathcal{W}}})\nonumber\\
&\stackrel{(d)}{\le} I(\mathcal{W}_{r},\mathcal{W}_{t};\hat{\mathbf{y}}_k|\overline{\mathcal{W}}) +  T^\textnormal{u} \epsilon+ \sum\nolimits_{j\in\mathcal{M}/\{k\}} T^\textnormal{u}\epsilon\nonumber\\
&\stackrel{(e)}{\le}  I(\mathbf{x}_1,\mathbf{x}_2,\cdots,\mathbf{x}_{a_i},\mathbf{x}_{i_o};\hat{\mathbf{y}}_i|\overline{\mathcal{W}}) + M T^\textnormal{u}\epsilon \nonumber\\
&\stackrel{(f)}{\le}T^\textnormal{u} \log P^\textnormal{u}  + M T^\textnormal{u}\epsilon,  \label{inequality}
\end{align}
where (a) is due to the independence of messages; (b) and (c) are based on the chain rule; (d) follows Fano{'}s inequalities (\ref{Fano11}) and (\ref{fano2}); (e) uses the data processing inequality; and (f) follows the DoF bound of MAC channel. Dividing (\ref{inequality}) by $NnB\!/\!\log\!P^\textnormal{u}$, and let $P^\textnormal{u}\!\to\!\infty$ and $\epsilon\!\to\!0$ as $B\to\infty$, we have
\begin{align}
\!\!\!\tau^\textnormal{u}\!&\ge\!\frac{|\mathcal{W}_{r}|\!+\!|\mathcal{W}_{t}|}{NnB}\!=\!\sum\limits_{i\in\mathcal{M}}\!\!\gamma_{i,k} \!+\! 1\!-\gamma_{i_o,k}\! = \!\sum_{i\in\mathcal{M}/\{i_o\}}\!\!\!\gamma_{i,k}\!  +\!1.\!\!
\end{align}
Hence, the NULT for a particular task assignment $\boldsymbol{\gamma}\!\triangleq\![\gamma_{i,k}]_{i\in\mathcal{M},k\in\mathcal{K}}$ satisfies $\tau^\textnormal{u} \!\ge\! \sum_{i\in\mathcal{M}/\{i_o\}}\!\gamma_{i,k} \!+\! 1$ for $k\!\in\!\mathcal{K},i_o\!\in\!\mathcal{M}$, i.e., the minimum NULT for task assignment policy $\boldsymbol{\gamma}$ is lower bounded by
\begin{equation}
\tau^{\textnormal{u}^*}(r,\boldsymbol{\gamma}) \ge \max\limits_{ k\in\mathcal{K}, i_o\in\mathcal{M}}~ \sum_{i\in\mathcal{M}/\{i_o\}}\gamma_{i,k}  + 1.
\end{equation}
Further, the minimum NULT over all feasible task assignment is given as $\tau^{\textnormal{u}^*}\!(r)\!=\!\min\limits_{\boldsymbol{\gamma}}\tau^{\textnormal{u}^*}\!(r,\boldsymbol{\gamma})$, i.e., it can be lower bounded by the optimal solution of the optimization problem
\begin{align}
\mathcal{P}_1:\quad&\min\limits_{\boldsymbol{\gamma}}~\max\limits_{ k\in\mathcal{K}, i_o\in\mathcal{M}}~ \sum_{i\in\mathcal{M}/\{i_o\}}\gamma_{i,k}  + 1 \nonumber\\
\mathnormal{s.t.}&\quad (\ref{cons111}), (\ref{cons222}).\nonumber
\end{align}
Note that (\ref{cons111}) and (\ref{cons222}) are the task assignment constraints for recovery order $r$. By defining a new variable $\lambda_{k,\bar{i}_o}\!=\!\sum\limits_{i\in\mathcal{M}/\{i_o\}}\gamma_{i,k}$, Problem $\mathcal{P}_1$ can be transformed into
\begin{align}
\mathcal{P}_2:\quad&\min\limits_{\boldsymbol{\lambda}}~\max\limits_{ k\in\mathcal{K}, i_o\in\mathcal{M}} \lambda_{k,\bar{i}_o} + 1 \nonumber\\
\mathnormal{s.t.}&\quad \sum_{k \in\mathcal{K}}\lambda_{k,\bar{i}_o} = r(M-1), ~ i_o\in\mathcal{M}, \label{P2con1}\\
                 &\quad ~0\le \lambda_{k,\bar{i}_o} \le M-1,~ k \in\mathcal{K},~i_o\in\mathcal{M}.
\end{align}
\begin{lemma}\label{lemma2}The unique optimal solution to $\mathcal{P}_2$ is given by $\lambda^*_{k,\bar{i}_o}\!=\!r(M\!-\!1)/K$, $k\!\in\!\mathcal{K}$, $i_o\!\in\!\mathcal{M}$.\end{lemma}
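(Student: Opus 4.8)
\textbf{Proof proposal for Lemma \ref{lemma2}.}
The plan is to solve $\mathcal{P}_2$ directly, exploiting the fact that its equality constraints \eqref{P2con1} decouple across the index $i_o$ while the objective is a single maximum taken over all pairs $(k,i_o)$. Minimizing a maximum of groups of variables that are constrained independently is equivalent to minimizing the maximum within each group separately, so it suffices to analyze each fixed $i_o\in\mathcal{M}$ in isolation and then observe that the per-group optimizers can be imposed simultaneously.

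First I would establish the lower bound. Fix any $i_o\in\mathcal{M}$ and any feasible point. Since $\lambda_{1,\bar{i}_o},\dots,\lambda_{K,\bar{i}_o}$ are $K$ nonnegative numbers whose sum equals $r(M-1)$ by \eqref{P2con1}, the largest of them is at least their average, i.e.\ $\max_{k\in\mathcal{K}}\lambda_{k,\bar{i}_o}\ge r(M-1)/K$. Taking the maximum also over $i_o\in\mathcal{M}$ and adding $1$ shows that the objective of $\mathcal{P}_2$ is at least $r(M-1)/K+1$ for every feasible $\boldsymbol{\lambda}$.

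Next I would verify that the claimed point $\lambda^{*}_{k,\bar{i}_o}=r(M-1)/K$ is feasible and attains this bound. Constraint \eqref{P2con1} holds because the $K$ equal terms sum to $r(M-1)$, and the box constraint $0\le\lambda^{*}_{k,\bar{i}_o}\le M-1$ holds because $1\le r\le K$ forces $0\le r(M-1)/K\le M-1$. The objective evaluated at $\boldsymbol{\lambda}^{*}$ equals $r(M-1)/K+1$, which matches the lower bound, so $\boldsymbol{\lambda}^{*}$ is optimal.

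Finally, for uniqueness I would argue by contradiction: suppose $\boldsymbol{\lambda}$ is optimal with $\lambda_{k_0,\bar{i}_o}\ne r(M-1)/K$ for some pair $(k_0,i_o)$. If $\lambda_{k_0,\bar{i}_o}>r(M-1)/K$, then the objective is at least $\lambda_{k_0,\bar{i}_o}+1>r(M-1)/K+1$, a contradiction. If instead $\lambda_{k_0,\bar{i}_o}<r(M-1)/K$, then since $\sum_{k\in\mathcal{K}}\lambda_{k,\bar{i}_o}=r(M-1)=K\cdot r(M-1)/K$, at least one other component $\lambda_{k_1,\bar{i}_o}$ must strictly exceed $r(M-1)/K$, and we are back to the previous case. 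Hence $\boldsymbol{\lambda}=\boldsymbol{\lambda}^{*}$. The only mildly delicate point is the decoupling observation at the start; once that is in place, the argument reduces to the elementary ``average versus maximum'' inequality together with the sum constraint, so I would not expect the need for convexity or KKT machinery, and there is no substantive obstacle.
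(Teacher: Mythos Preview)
Your proof is correct and follows essentially the same approach as the paper: both rely on the elementary observation that the maximum of $K$ numbers with fixed sum $r(M-1)$ is at least the average $r(M-1)/K$, with equality (hence uniqueness) only at the uniform allocation. Your write-up is slightly more explicit---you separate lower bound, achievability, and uniqueness, and you verify the box constraint $0\le r(M-1)/K\le M-1$---whereas the paper packages everything into a single contradiction argument; but there is no substantive difference.
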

\emph{Proof:}
By contradiction, assuming that there exists an optimal solution $\{\lambda^{'}_{k,\bar{i}_o}\}$ to $\mathcal{P}_2$ which does not satisfy $\lambda^{'}_{k,\bar{i}_o}\!=\!r(M\!-\!1)/K$ for $k\!\in\!\mathcal{K}$ and $i_o\!\in\!\mathcal{M}$. By (\ref{P2con1}), there must exist index $j\!\in\!\mathcal{K}$ such that $\lambda^{'}_{j,\bar{i}_o}\!>\!r(M\!-\!1)/K$ for $i_o\!\in\!\mathcal{M}$; otherwise, we have $\sum\limits_{k \in\mathcal{K}}\lambda^{'}_{k,\bar{i}_o}\!<\!r(M-1)$ for $i_o\!\in\!\mathcal{M}$. The optimal objective satisfies
$\max\limits_{ k\in\mathcal{K}, i_o\in\mathcal{M}} \lambda^{'}_{k,\bar{i}_o} \!+\! 1\!\ge\!\lambda^{'}_{j,\bar{i}_o} \!+ \! 1>r(M\!-\!1)/K\!+\!1$, and
$r(M\!-\!1)/K\!+\!1$ is the objective value at $\lambda_{k,\bar{i}_o}\!=\!r(M\!-\!1)/K$, $k\!\in\!\mathcal{K}$, $i_o\!\in\!\mathcal{M}$. So the initial assumption does not hold. The optimal solution to $\mathcal{P}_2$ is $\lambda^*_{k,\bar{i}_o}\!=\!r(M\!-\!1)/K$ for $k\!\in\!\mathcal{K}$ and $i_o\!\in\!\mathcal{M}$, which is unique.

In turn, we use $\{\lambda^*_{k,\bar{i}_o}\}$ in Lemma \ref{lemma2} to construct a feasible solution to $\mathcal{P}_1$ by letting $\gamma^*_{i,k}\!=\!\lambda^*_{k,\bar{i}_o}/(M\!-\!1)$ for $i\!\in\!\mathcal{M}$ and $k\!\in\!\mathcal{K}$, and hence obtain the optimal solution to $\mathcal{P}_1$ as $\gamma^*_{i,k}\!=\!r/K$. Therefore, at repetition order $r$, the minimum NULT $\tau^{\textnormal{u}^*}(r)$ is lower bounded by
\begin{equation} \label{lowerbound_tauu}
\tau^{\textnormal{u}^*}(r)\ge \tau_l^{\textnormal{u}}(r) = \frac{r(M-1)+K}{K}.
\end{equation}
The lower bound of NULT in Theorem \ref{lower_bound} is thus proved.
\subsubsection{Optimality}Since (\ref{lowerbound_tauu}) is the same as achievable bound (\ref{achieveUpoad}), the NULT in (\ref{achieveUpoad}) is optimal for any given $r$, or more sufficiently, for any fixed $\left(\tau^\text{c}(r,q),\tau^\text{d}(r,q)\right)$, as stated in Lemma \ref{gapLemma}.
\vspace{-4mm}
\subsection{Lower Bound and Multiplicative Gap Analysis of NCT}\label{binaryconverseup}
\subsubsection{Lower bound} Let $\{X_k\}_{q:K}$ denote the $q$-th smallest value of $K$ variables $\{X_k\}^{K}_{k=1}$ and $q\!:\!K$ denote the index of $q$-th smallest variable. For a particular task assignment policy $\big\{\mathcal{U}_{i,\mathcal{K}^{'}}\!\big\}$ with repetition order $r$ and recovery order $q$ and satisfying (\ref{cons111}) and (\ref{cons222}), the computation time when the $q$-th fastest EN finishes its assigned tasks is lower bounded by
\begin{align}
T_{q:K}^{\textnormal{c}} &= \bigg\{\!\mu m\sum\limits_{i\in\mathcal{M}}\gamma_{i,k}N \omega_k\!\bigg\}_{\!\!q:K}\nonumber\\
&\stackrel{(g)}{\ge}  \max\limits_{t\in[q]}\bigg\{\!\mu m \bigg\{\sum\limits_{i\in\mathcal{M}}\gamma_{i,k}N \bigg\}_{\!\!t:K}\!\!\!\cdot\omega_{q-t+1:K}\!\bigg\},
\end{align}
where $(g)$ follows the fact that for 2 sequences $\{x_k\}_{k=1}^{K}$ and $\{y_k\}_{k=1}^{K}$, given $\{x_k\}_{t:K}\{y_k\}_{q-t+1:K}$ for $t\!\in\![q]$, there are at most $q\!-\!1$ product values among $\{x_ky_k\}^K_{k=1}$ less than $\{x_k\}_{t:K}\{y_k\}_{q-t+1:K}$ for $t\!\in\![q]$. So the $q$-th smallest product satisfies $\{x_ky_k\}_{q:K}\!\ge\!\{x_k\}_{t:K}\{y_k\}_{q-t+1:K}$ for $t\!\in\![q]$. Taking the expectation on $T_{q:K}^{\textnormal{c}}$, we have
\begin{align}
\mathbb{E}\left[ T_{q:K}^{\textnormal{c}} \right] & \ge \mathbb{E}\left[\max\limits_{t\in[q]}\bigg\{\!\mu m \bigg\{\sum\limits_{i\in\mathcal{M}}\gamma_{i,k}N \bigg\}_{\!\!t:K}\!\!\!\cdot\omega_{q-t+1:K}\!\bigg\}\right]\nonumber
\end{align}
\begin{align}
&\quad\quad\quad\quad\stackrel{(h)}{\ge} \max\limits_{t\in[q]}\bigg\{\!\mu m \bigg\{\sum\limits_{i\in\mathcal{M}}\gamma_{i,k}N \bigg\}_{\!\!t:K}\!\!\!\cdot\mathbb{E}\left[\omega_{q-t+1:K}\right]\!\bigg\} \nonumber\\
&\quad\quad\quad\quad\stackrel{(i)}{=} \max\limits_{t\in[q]}\frac{(H_K - H_{K-q+t-1})\mu m}{\eta} \left\{\sum\limits_{i\in\mathcal{M}}\gamma_{i,k}N \right\}_{\!\!t:K},
\end{align}
where (h) follows $\mathbb{E}\big[\max\limits_{t}x_t\big]\!\ge\!\max\limits_{t}\mathbb{E}\left[x_t\right]$, (i) uses the $(q\!-\!t\!+\!1)$-th order statistic of $K$ i.i.d exponential random variables. The second term denotes the $t$-th smallest value among $K$ EN workload sizes. By (\ref{cons111}) and (\ref{cons222}), for $\forall i\!\in\!\mathcal{M}$, we let $\gamma_{i,k}\!=\!1$, $k\!=\!t\!+\!1\!:\!K,t\!+\!2\!:\!K,\cdots,K\!:\!K$, where $k\!=\!t\!:\!K$ denotes the index of the $t$-th smallest value in $\{\gamma_{i,k}\!\}_{\!k\in[K]}$. So the sum of the $t$ smallest values ($k\!=\!1\!:\!K,\cdots,t\!:\!K$) is lower bounded by $(r\!-\!K\!+t)^{+}NM$. Since the second term also represents the largest value among those $t$ smallest EN workload sizes, so this term can be further lower bounded by the average value $(r\!-\!K\!+ t)^{+}NM/t$. So the average time for the $q$ fastest ENs to finish their tasks is lower bounded by
$T^\textnormal{c}(r,q)\!\ge\!\max\limits_{t\in[q]}\big((H_K - H_{K-q+t-1})\mu m/\eta\big)\big((r-\!K\!+t)^{+}NM/t\big)$. Normalizing it by $Nm/\eta$, the lower bound of the minimum NCT is given by
\begin{equation}
\tau^{\textnormal{c}^*}(r,q)\ge\tau_l^{\textnormal{c}}(r,q)=\max\limits_{t\in[q]}\frac{(H_K - H_{K-q+t-1})(r-\!K\!+t)^{+}M\mu }{t}. \label{low_compu}
\end{equation}
\vspace{-4mm}
\subsubsection{Multiplicative gap} The multiplicative gap between the achievable NCT in Theorem \ref{achievableresults} and the lower bound (\ref{low_compu}) satisfies
\begin{align}
\frac{\tau_{\textnormal{a}}^{\textnormal{c}}(r,q)}{\tau_l^{\textnormal{c}}(r,q)} &\le \min\limits_{t\in[q]}\frac{Mr\mu(H_k\!-\!H_{K-q})t}{K(H_K - H_{K-q+t-1})(r-\!K\!+t)^{+}M\mu}\nonumber
\end{align}
\begin{align}
~~\quad\quad\quad\quad&\le \min\limits_{t\in[q]} \frac{t}{(r-\!K\!+t)^{+}} \cdot\left(1+\frac{H_{K-q+t-1}\!-\!H_{K-q}}{H_K - H_{K-q+t-1}}\right)\nonumber\\
&\le \frac{q/2}{(r-\!K\!+q/2)^{+}} \cdot \left(1+\frac{\frac{q}{2}\frac{1}{K-q+1}}{\frac{q}{2}\frac{1}{K}}\right) \nonumber\\
&= \frac{q/2}{(r-\!K\!+q/2)^{+}} \cdot \left(1+\frac{K}{K-q+1}\right).
\end{align}
When $r\!\ge\!K\!-\!n_1$ and $q\!\le\!K(1\!-\!1/n_2)\!+\!1$ with integers $0\!\le\!n_1\!<\!q/2$ and $n_2\!\ge\!1$, we have $\frac{q/2}{(r-\!K\!+q/2)^{+}}\!\le\!\frac{q/2}{q/2-n_1}\!\le\!n_1\!+\!1$ and $K/(K\!-q+\!1)\!\le\!n_2$, respectively, and consequently, we have $\tau_{\textnormal{a}}^{\textnormal{c}}/\tau_l^{\textnormal{c}}\!\le\!(1\!+n_1 )(1\!+n_2)$. Since the NULT is optimal and increases strictly with $r$, the repetition order satisfies $r\!\ge\!\!K\!-n_1$ when the NULT $\tau^\textnormal{u}\!\ge\!\tau_{\textnormal{a}}^{\textnormal{u}}(K\!-\!n_1)$.

We thus prove the lower bound of NCT in (\ref{lowercompute}) and the order-optimality of NCT in (\ref{gapnct}). \vspace{-2mm}
\subsection{Lower Bound and Multiplicative Gap Analysis of NDLT}\label{binaryconverseup}
\subsubsection{Lower bound}
\begin{figure}[t]
\centering
\includegraphics[width=2.7in, height=2.2in]{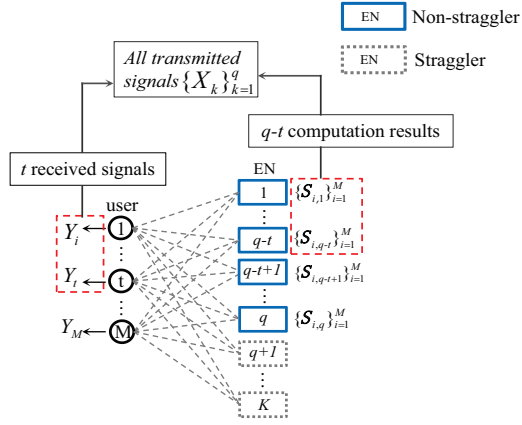}
\vspace{-2mm}
\caption{Illustration of the converse proof for NDLT.}
\vspace{-8mm}
\label{conversefigure}
\end{figure}
For a particular task assignment policy $\big\{\mathcal{U}_{i,\mathcal{K}^{'}}\!\big\}$ satisfying (\ref{cons111}) and (\ref{cons222}), and a particular subset of $q$ ENs denoted as $\mathcal{K}_q\!\subseteq\!\mathcal{K}$ whose outputs are available, each EN $k\!\in\!\mathcal{K}_q$ is assigned $r_{i,k}N$ input vectors from each user $i\!\in\!\mathcal{M}$ and can store $\mu m$ rows of $\mathbf{A}$. Since each user $i$ wants $mN$ row-vector product results $\{\mathbf{v}_{i,j}\!=\!\mathbf{A}_{m\times n}\mathbf{u}_{i,j}\}_{j\in[N]}$, it is equivalent to state that each EN $k$ can store $r_{i,k}\mu$ fractional outputs desired by each user $i$, denoted as $\mathcal{S}_{i,k}\!\triangleq\!\{\mathbf{A}_k\mathbf{u}_{i,j}\!:\mathbf{u}_{i,j}\!\in\!\mathcal{U}_{i,\mathcal{K}^{'}},k\!\in\!\mathcal{K}^{'}\}$ and with size $|\mathcal{S}_{i,k}|\!=\!\gamma_{i,k}\mu NmB$ bits, where $\gamma_{i,k}$ satisfies (\ref{cons111}) and (\ref{cons222}).
Thus, the policy $\big\{\mathcal{U}_{i,\mathcal{K}^{'}}\!\big\}_{i\in\mathcal{M},k\in\mathcal{K}}$ with an available EN set $\mathcal{K}_q$ is equivalent to a particular computation results distribution $\{\mathcal{S}_{i,k}\}_{i\in\mathcal{M},k\in\mathcal{K}_q}$.

Let $\mathcal{M}_t\!\subseteq\!\mathcal{M}$ denote an arbitrary subset of $t$ users and $\mathcal{Q}_{q-t}\!\subseteq\!\mathcal{K}_q$ denote an arbitrary subset of $q\!-t$ ENs. Also, we have $\mathcal{M}_{M-t}\!=\!\mathcal{M}/\mathcal{M}_t$ and $\mathcal{Q}_t\!=\!\mathcal{K}_q/\mathcal{Q}_{q-t}$. For a particular computation results distribution $\{\mathcal{S}_{i,k}\}_{i\in\mathcal{M},k\in\mathcal{K}_q}$, we adopt the arguments proved in \cite[Lemma 6]{sengupta2017fog} to derive the lower bound of the NDLT, i.e., intuitively, as shown in Fig. \ref{conversefigure}, \emph{given any subset of $t$ signals received at $t\!\le\!\min\{q,M\}$ users, denoted as $\{Y_i\}_{i\in\mathcal{M}_t}$, and the stored computation results information of $q-t$ ENs, denoted as $\{\mathcal{S}_{i,k}\}_{i\in\mathcal{M},k\in\mathcal{Q}_{q-t}}$, all transmitted signals $\{X_{k}\}_{k\in\mathcal{K}_q}$ and all the desired outputs $\{\mathbf{v}_{i,j}\}_{i\in\mathcal{M},j\in[N]}$ can be resolved in the high-SNR regime.} First, we have the following equality,
\begin{align}
&MNmB \!= \! H\!\left(\{\mathbf{v}_{i,j}\}_{i\in\mathcal{M},j\in[N]}\right) \nonumber\\
\!&= \! I\!\left(\!\{\mathbf{v}_{i,j}\}_{i\in\mathcal{M},j\in[N]};\{Y_i\}_{i\in\mathcal{M}_t},\{\mathcal{S}_{i,k}\}_{i\in\mathcal{M},k\in\mathcal{Q}_{q\!-\!t}}\!\right) \!+ \! H\!\left(\!\{\mathbf{v}_{i,j}\}_{i\in\mathcal{M},j\in[N]}|\{Y_i\}_{i\in\mathcal{M}_t},\{\mathcal{S}_{i,k}\}_{i\in\mathcal{M},k\in\mathcal{Q}_{q\!-\!t}}\!\right)\!.\label{ineq1}
\end{align}
For the first term, following steps in \cite[Eq. (64)]{sengupta2017fog}, we have
\begin{small}\begin{align}
&I\left(\{\mathbf{v}_{i,j}\}_{i\in\mathcal{M},j\in[N]};\{Y_i\}_{i\in\mathcal{M}_t},\{\mathcal{S}_{i,k}\}_{i\in\mathcal{M},k\in\mathcal{Q}_{q\!-\!t}}\right) \nonumber\\
&=I\left(\{\mathbf{v}_{i,j}\}_{i\in\mathcal{M},j\in[N]};\{Y_i\}_{i\in\mathcal{M}_t}\right) + I\left(\{\mathbf{v}_{i,j}\}_{i\in\mathcal{M},j\in[N]};\{\mathcal{S}_{i,k}\}_{i\in\mathcal{M},k\in\mathcal{Q}_{q\!-\!t}}|\{Y_i\}_{i\in\mathcal{M}_t}\right)\nonumber\\
&\le I\left(\{\mathbf{v}_{i,j}\}_{i\in\mathcal{M},j\in[N]};\{Y_i\}_{i\in\mathcal{M}_t}\right) +
I\left(\{\mathbf{v}_{i,j}\}_{i\in\mathcal{M},j\in[N]};\{\mathcal{S}_{i,k}\}_{i\in\mathcal{M},k\in\mathcal{Q}_{q\!-\!t}},\{\mathbf{v}_{i,j}\}_{i\in\mathcal{M}_t,j\in[N]}|\{Y_i\}_{i\in\mathcal{M}_t}\right)~~~~~~~~~\nonumber\\
&=I\left(\{\mathbf{v}_{i,j}\}_{i\in\mathcal{M},j\in[N]};\{Y_i\}_{i\in\mathcal{M}_t}\right) +
I\left(\{\mathbf{v}_{i,j}\}_{i\in\mathcal{M},j\in[N]};\{\mathbf{v}_{i,j}\}_{i\in\mathcal{M}_t,j\in[N]}|\{Y_i\}_{i\in\mathcal{M}_t}\right)
\nonumber\\
&~~~~~~~~~~~~~~~~~~~~~~~~~~~~~~~~~~~~~~~~+\!I\left(\{\mathbf{v}_{i,j}\}_{i\in\mathcal{M},j\in[N]};\{\mathcal{S}_{i,k}\}_{i\in\mathcal{M},k\in\mathcal{Q}_{q\!-\!t}},|\{\mathbf{v}_{i,j}\}_{i\in\mathcal{M}_t,j\in[N]},\{Y_i\}_{i\in\mathcal{M}_t}\right)\nonumber\\
&\le I\left(\{\mathbf{v}_{i,j}\}_{i\in\mathcal{M},j\in[N]};\{Y_i\}_{i\in\mathcal{M}_t}\right) +
H\left(\{\mathbf{v}_{i,j}\}_{i\in\mathcal{M}_t,j\in[N]}|\{Y_i\}_{i\in\mathcal{M}_t}\right)
\nonumber\\
&~~~+H\left(\{\mathcal{S}_{i,k}\}_{i\in\mathcal{M},k\in\mathcal{Q}_{q\!-\!t}}|\{\mathbf{v}_{i,j}\}_{i\in\mathcal{M}_t,j\in[N]},\{Y_i\}_{i\in\mathcal{M}_t}\right)
-H\left(\{\mathcal{S}_{i,k}\}_{i\in\mathcal{M},k\in\mathcal{Q}_{q\!-\!t}}|\{\mathbf{v}_{i,j}\}_{i\in\mathcal{M},j\in[N]},\{Y_i\}_{i\in\mathcal{M}_t}\right)\nonumber\\
&\!\stackrel{(j)}{\le}\!h\left(\{Y_i\}_{i\in\mathcal{M}_t}\right)-h\left(\{Y_i\}_{i\in\mathcal{M}_t}|\{\mathbf{v}_{i,j}\}_{i\in\mathcal{M},j\in[N]}\right) +
tNmB\epsilon  + H\left(\{\mathcal{S}_{i,k}\}_{i\in\mathcal{M},k\in\mathcal{Q}_{q\!-\!t}}|\{\mathbf{v}_{i,j}\}_{i\in\mathcal{M}_t,j\in[N]}\right)\nonumber\\
& \stackrel{(k)}{\le} tT\log\left(2\pi e(\Lambda P^\textnormal{d}+1)\right)\!-\! h\left(\{n_i\}_{i\in\mathcal{M}_t}|\{\mathbf{v}_{i,j}\}_{i\in\mathcal{M},j\in[N]}\right)\!+
tNmB\epsilon  +\! \sum\limits_{k\in\mathcal{Q}_{q\!-\!t}}\!\!H\left(\{\mathcal{S}_{i,k}\}_{i\in\mathcal{M}}|\{\mathbf{v}_{i,j}\}_{i\in\mathcal{M}_t,j\in[N]}\right)\nonumber\\
& \stackrel{(l)}{\le}\!tT\log\left(2\pi e(\Lambda P^\textnormal{d}+1)\right) -  tT\log\left(2\pi e \right) +
tNmB\epsilon  + \sum\limits_{k\in\mathcal{Q}_{q\!-\!t}}\sum\limits_{i\in\mathcal{M}_{M\!-\!t}}H\left(\mathcal{S}_{i,k}\right)\nonumber\\
&\le tT\log\left(\Lambda P^\textnormal{d}+1\right) + tNmB\epsilon  + \sum\limits_{k\in\mathcal{Q}_{q\!-\!t}}\sum\limits_{i\in\mathcal{M}_{M\!-\!t}}\gamma_{i,k}\mu NmB, \label{ineq2}
 \end{align}\end{small}
\!\!where, in step $(j)$, $\{Y_i\}$ are continuous random variables, the third term uses Fano{'}s inequality, the fourth term is because dropping the condition increases the entropy, the last term is 0 since the storage information $\{\mathcal{S}_{i,k}\}$ are the functions of $\{\mathbf{v}_{i,j}\}_{i\in\mathcal{M},j\in[N]}$; In step $(k)$, the first term uses \cite[Lemma 5]{sengupta2017fog}, and note that $\Lambda$ defined in \cite[Lemma 5]{sengupta2017fog} is a constant only depending on downlink channel coefficients in $\mathbf{H}^\textnormal{d}$.
For the second term, by \cite[Lemma 6]{sengupta2017fog} that proves the adopted argument, we have
\begin{align}
\!\!\!H\left(\{\mathbf{v}_{i,j}\}_{i\in\mathcal{M},j\in[N]}|\{Y_i\}_{i\in\mathcal{M}_t},\{\mathcal{S}_{i,k}\}_{i\in\mathcal{M},k\in\mathcal{Q}_{q\!-\!t}}\right)
\!\le\!tNmB\epsilon\!+\!T\log\det\left( \mathbf{I}_{M-t} +\tilde{\mathbf{H}}^\textnormal{d}(\tilde{\mathbf{H}}^\textnormal{d})^{H}\right)\!,\!\! \label{ineq3}
\end{align}
where the $(M\!-\!t)\!\times\!(M\!-\!t)$ matrix $\tilde{\mathbf{H}}^\textnormal{d}$ defined in \cite[Lemma 6]{sengupta2017fog} only depends on the channel matrix $\mathbf{H}^\textnormal{d}$, and $\mathbf{I}_{M\!-\!t}$ is a $(M\!-\!t)\!\times\!(M\!-\!t)$ identity matrix. The expressions of $\tilde{\mathbf{G}}$ and $\Lambda$ are omitted here since they can be treated as constants.

Substituting (\ref{ineq2}) and (\ref{ineq3}) into (\ref{ineq1}), we have
\begin{align}
\!\!\!M NmB\!\le\!tT\log\left(\Lambda P^\textnormal{d}+1\right)\! +\! 2tNmB\epsilon \! +\!\!\sum\limits_{k\in\mathcal{Q}_{q\!-\!t}}\sum\limits_{i\in\mathcal{M}_{M\!-\!t}}\!\!\! \! \gamma_{i,k}\mu NmB\!+\! T\log\det\! \left( \mathbf{I}_{M-t}\!+\!\tilde{\mathbf{H}}^\textnormal{d}(\tilde{\mathbf{H}}^\textnormal{d})^{H}\right)\!,\!\!
\end{align}
Moving $T$ to the left side and dividing by $\frac{NmB}{\log P^\textnormal{d}}$, we have
\begin{align}
\frac{T}{NmB/\log P^\textnormal{d}}\ge \frac{M\!-\!\sum\limits_{k\in\mathcal{Q}_{q\!-\!t}}\sum\limits_{i\in\mathcal{M}_{M\!-\!t}}\!\!\gamma_{i,k}\mu\!-\!2t\epsilon}{t}
\cdot\!\frac{t\log P^\textnormal{d}}{t\log\left(\Lambda P^\textnormal{d}\!+\!1\right)\!+\!\log\det\left( \mathbf{I}_{M-t} \!+\!\tilde{\mathbf{H}}^\textnormal{d}(\tilde{\mathbf{H}}^\textnormal{d})^{H}\right)}.
\end{align}
Taking $P^\textnormal{d}\to\infty$ and $\epsilon\to 0$ as $B\to\infty$, the minimum NDLT under the output distribution  $\{\mathcal{S}_{i,k}\}_{i\in\mathcal{M},k\in\mathcal{Q}_{q\!-\!t}}$ is lower bounded by
\begin{equation}
\tau^{\textnormal{d}^*}\!(r,\mathcal{K}_{q},\mathcal{Q}_{q-t}) \ge  \frac{M\!-\!\!\sum\limits_{k\in\mathcal{Q}_{q\!-\!t}}\sum\limits_{i\in\mathcal{M}_{M\!-\!t}}\!\!\!\gamma_{i,k}\mu}{t},~ \forall \mathcal{Q}_{q-t}\subseteq\mathcal{K}_q.\!\!
\end{equation}
Note that the adopted argument holds for any subset of $q\!-\!t$ ENs (see Fig. \ref{conversefigure}). Thus, by tasking the sum over all possible subset $\mathcal{Q}_{q-t}\subseteq\mathcal{K}_q$, we have
\begin{align}
\binom{q}{q\!-\!t}\tau^{\textnormal{d}^*}(r,\mathcal{K}_{q},q-t)& \ge \sum\limits_{\mathcal{Q}_{q-t}\subseteq\mathcal{K}_q} \!\!\!\frac{M\!-\!\!\sum\limits_{k\in\mathcal{Q}_{q\!-\!t}}\sum\limits_{i\in\mathcal{M}_{M\!-\!t}}\!\!\!\gamma_{i,k}\mu}{t}\nonumber\\
& = \frac{\binom{q}{q-t}M\!-\!\!\sum\limits_{i\in\mathcal{M}_{M\!-\!t}}\sum\limits_{\mathcal{Q}_{q-t}\subseteq\mathcal{K}_q}\sum\limits_{k\in\mathcal{Q}_{q\!-\!t}}\!\!\!\gamma_{i,k}\mu}{t}\nonumber\\
& = \frac{\binom{q}{q-t}M\!-\!\!\sum\limits_{i\in\mathcal{M}_{M\!-\!t}}\binom{q-1}{q-t-1}\sum\limits_{k\in\mathcal{K}_q}\!\!\gamma_{i,k}\mu}{t}.\!\!
\end{align}
For the particular policy $\big\{\mathcal{U}_{i,\mathcal{K}^{'}}\!\big\}$ with repetition order $r$ and satisfying (\ref{cons111}) and (\ref{cons222}), this lower bound also holds for any subset $\mathcal{K}_{q}$ since $K\!-\!q$ stragglers occur randomly (see Fig. \ref{conversefigure}), by taking the sum over all possible subsets $\mathcal{K}_{q}\subseteq\mathcal{K}$, we have
\begin{align}
\!\!\!\!\binom{\!K\!}{\!q\!}\!\binom{\!q\!}{\!q\!-\!t\!}\tau^{\textnormal{d}^*}\!(r,q,q\!-\!t)
& \ge  \sum\limits_{\mathcal{K}_{q}\subseteq\mathcal{K}}\!\!\!\frac{\binom{q}{q-t}M\!-\!\!\sum\limits_{i\in\mathcal{M}_{\!M\!-\!t}}\!\!\!\!\binom{q-1}{q-t-1}\!\!\sum\limits_{k\in\mathcal{K}_q}\!\!\gamma_{i,k}\mu}{t}\nonumber\\
& = \frac{\binom{K}{q}\binom{q}{q-t}M\!-\!\!\!\sum\limits_{i\in\mathcal{M}_{\!M\!-\!t}}\!\!\!\binom{q-1}{q-t-1}\!\sum\limits_{\mathcal{K}_{q}\subseteq\mathcal{K}}\sum\limits_{k\in\mathcal{K}_q}\!\!\gamma_{i,k}\mu}{t}\nonumber\\
& = \frac{\binom{K}{q}\binom{q}{q-t}M\!-\!\!\!\sum\limits_{i\in\mathcal{M}_{M\!-\!t}}\!\!\!\binom{q-1}{q-t-1}\binom{K-1}{q-1}\!\!\sum\limits_{k\in\mathcal{K}}\!\!\gamma_{i,k}\mu}{t}\nonumber\\
& \stackrel{(m)}{=}\frac{\binom{K}{q}\binom{q}{q-t}M\!\!-\!(\!M\!\!-\!t)\binom{q-1}{q-t-1}\binom{K-1}{q-1}r\mu}{t},\!\!\label{eq46}
\end{align}
where $(m)$ is due to (\ref{cons111}). Remanaging (\ref{eq46}), the lower bound of NDLT at pair $(r,q)$ is given by
\begin{align}
\tau^{\textnormal{d}^*}(r,q,q-t)&\ge \frac{M-(M\!-\!t)(q-t)\frac{r}{K}\mu}{t}, \label{tau_t}
\end{align}
Since the argument we adopt to derive (\ref{tau_t}) holds for $1\!\le\!t\!\le\!\min\{q,M\}$, the lower bound of the minimum NDLT at pair $(r,q)$ can be optimized as
\begin{equation} \label{lowerbound_taud}
\tau^{\textnormal{d}^*}(r,q)\ge \tau_l^{\textnormal{d}}(r,q)= \max_{t\in\{1,\cdots,\min\{q,M\}\}}\frac{M-(M\!-\!t)(q-t)\frac{r}{K}\mu}{t}.
\end{equation}
\vspace{-3mm}
\subsubsection{Multiplicative gap}
By (\ref{achievedownload}), the achievable NDLT is upper bounded by
\begin{align}
\tau_{\textnormal{a}}^{\textnormal{d}}&=\sum\limits^{\min\{r,q\}}_{p_1=r-K+q}B_{p_1}\left(\sum\limits^{l_\textnormal{max}}_{p_2=l_\textnormal{min}}\frac{B_{p_2}}{d_{p_1,M,p_2}^{\textnormal{d}}}\!+\!\frac{B_{l_{p_1}\!-\!1}}{d^\textnormal{d}_{p_1,M,l_{p_1}\!-\!1}}\right)\nonumber\\
&\stackrel{(m)}{\le}\sum\limits^{\min\{r,q\}}_{p_1=r-K+q}B_{p_1}\frac{\sum\limits^{l_\textnormal{max}}_{p_2=l_\textnormal{min}}\!B_{p_2}\!+\!B_{l_{p_1}\!-\!1}}{d^\textnormal{d}_{p_1,M,1}}\nonumber\\
&\stackrel{(n)}{\le}\frac{1}{d^\textnormal{d}_{r-K+q,M,1}},
\end{align}
where $(m)$ is because $d_{p_1,M,p_2}^{\textnormal{d}}$ increases with $p_2$ \cite[Lemma 1]{gckkl} and $(n)$ is because $d^\textnormal{d}_{p_1,M,1}\!=\!p_1/(p_1\!+\!M\!-\!1)$ increases with $p_1$. By (\ref{lowerbound_taud}), we have $\tau_l^{\textnormal{d}}(r,q)\!\ge\!M/\min\{q,M\}$, so the multiplicative gap satisfies
\begin{align}
\frac{\tau_{\textnormal{a}}^{\textnormal{d}}}{\tau_l^{\textnormal{d}}}&\le\frac{\min\{q,M\}}{d^\textnormal{d}_{r-K+q,M,1}} =\frac{\min\{q,M\}(r\!-\!K\!+\!q\!+\!M\!-\!1)}{(r\!-\!K\!+\!q)M}.
\end{align}
If $q\!\le\!M$, we have $\tau_{\textnormal{a}}^{\textnormal{d}}/\tau_l^{\textnormal{d}}\!\le\!\frac{q}{r-K+q}(\frac{q}{M}\!+\!\frac{M-1}{M}\!-\!\frac{K-r}{M})\!\le\!\frac{2q}{r-K+q}\!\le\!\frac{2q}{q-n}\!\le\!2(n\mu\!+\!1)$ for $r\!\ge\!K\!-\!n$; otherwise, we have $\tau_{\textnormal{a}}^{\textnormal{d}}/\tau_l^{\textnormal{d}}\!\le\!1\!+\!\frac{M-1}{r-K+q}\!\le\!1\!+\!\frac{q-1}{q-n}\!\le\!2\!+\!(n\!-\!1)\mu$ for $r\!\ge\!K\!-\!n$. Here, integer $n$ satisfies $n\!\le\!q\!-\!1/\mu$ due to $(r\!-\!K\!+\!q)\mu\!\ge\!1$. In summary, since $2(n\mu\!+\!1)\!>\!2\!+\!(n\!-\!1)\mu$, we have $\tau_{\textnormal{a}}^{\textnormal{d}}/\tau_l^{\textnormal{d}}\!\le\!2(n\mu\!+\!1)$ for $r\!\ge\!K\!-\!n$. Furthermore, when the NULT $\tau^\textnormal{u}\!\ge\!\tau_{\textnormal{a}}^{\textnormal{u}}(K\!-n)$, the repetition order satisfies $r\!\ge\!K\!-\!n$. Thus, when $r\!=\!K$, or equivalently, $\tau^\textnormal{u}\!\ge\!\tau_{\textnormal{a}}^{\textnormal{u}}(K)$, we have $\tau_{\textnormal{a}}^{\textnormal{d}}/\tau_l^{\textnormal{d}}\!\le\!2$.

Next, consider the special case $\mu\!=\!1$. For any input vectors with degrees of replication of $p_1$, the degrees of replication of their associated outputs is $p_2\!=\!l_\textnormal{max}\!=\!l_\textnormal{min}\!=\!p_1$, and we also have $B_{p_2}\!=\!1$, so the achievable NDLT in (\ref{achievedownload}) can be simplified as
\begin{align}
\tau_{\textnormal{a}}^{\textnormal{d}}=\sum\limits^{\min\{r,q\}}_{p_1=r-K+q}B_{p_1}\frac{1}{d^\textnormal{d}_{p_1,M,p_1}}
\le M\frac{\sum\limits^{\min\{r,q\}}_{p_1=r\!-\!K\!+q}B_{p_1}}{\min\{r-K+q,M\}}
=\frac{M}{\min\{r\!-\!K\!+q,M\}}.
\end{align}
Due to $\tau_l^{\textnormal{d}}(r,q)\!\ge\!M/\min\{q,M\}$, we have
$\tau_{\textnormal{a}}^{\textnormal{d}}/\tau_l^{\textnormal{d}}\!\le\!\min\{q,M\}/\min\{r\!-\!K\!+\!q,M\}.$
It is seen that when $\tau^\textnormal{u}(r)\!\ge\!\tau_{\textnormal{a}}^{\textnormal{u}}(M\!+\!K\!-q)$, we have $\tau_{\textnormal{a}}^{\textnormal{d}}\!=\tau_l^{\textnormal{d}}\!=\!1$ that is optimal; when $\tau_{\textnormal{a}}^{\textnormal{u}}(K\!-\!n)\!\le\!\tau^\textnormal{u}(r)\!<\!\tau_{\textnormal{a}}^{\textnormal{u}}(M\!+\!K\!-\!q)$, we have $\tau_{\textnormal{a}}^{\textnormal{d}}/\tau_l^{\textnormal{d}}\!\le\!q/(r\!-\!K\!+\!q)\!\le\!q/(q\!-\!n)\!\le\!n\!+\!1$, where the integer $n$ satisfies $\!q\!-\!M\!<\!n\!\le\!q\!-\!1$. We prove the lower bound of NDLT in (\ref{lowerdownload}) and the multiplicative gap in (\ref{gapndlt}).

\vspace{-2mm}
\subsection{Outer Bound of Compute-Download Latency Region}
Based on the feasible set $\mathcal{R}$ in (\ref{regionR}) and the convexity of $\mathscr{T}^{*}(\tau^\textnormal{u})$ in Remark \ref{remarkconvex}, for an NULT $\tau^\textnormal{u}\!=\!\tau_{\textnormal{a}}^{\textnormal{u}}(r)$ in (\ref{achieveUpoad}) for some $r$, an outer bound $\mathscr{T}_{out}(\tau^\textnormal{u})$ of the compute-download latency region is given as the convex hull of set
$\big\{\!\!\left(\tau_l^{\textnormal{c}}(r,q),\tau_l^{\textnormal{d}}(r,q)\right)\!\!:\!q\!\in\!\!\big[\lceil\!\frac{1}{\mu}\!\rceil\!+\!K\!-r\!:\!\!K\big]\!\big\}$.

\bibliographystyle{IEEEtran}
\bibliography{refer}

\end{spacing}

\end{document}